\tikzstyle{vertex}=[circle, draw, fill=black, inner sep=0pt, minimum size=7pt]
\newcommand{\vertex}{\node[vertex]}
\newtheorem{theorem}{Theorem}
\newtheorem{lemma}{Lemma}
\newcommand{\rtoldownimplies}{ \rotatebox[origin=c]{45}{$\Leftarrow$}}
\newcommand{\subseteqdown}{ \rotatebox[origin=c]{90}{$\supseteq$}}
\newcommand{\dotsdown}{ \rotatebox[origin=c]{90}{$\cdots$}}
\newcommand{\dotsdiag}{ \rotatebox[origin=c]{45}{$\cdots$}}
\author[G.Spencer]{Gwen Spencer\affiliationmark{1}}
\title[Sticky Seeding in
Discrete-Time 
Reversible-Threshold Networks]{Sticky Seeding in
Discrete-Time \\
Reversible-Threshold Networks}
\affiliation{Smith College, Northampton, MA, USA.}
\keywords{Network seeding, Spreading phenomenon, Combinatorial optimization, Convergence}
\begin{document}
\publicationdetails{18}{2016}{3}{2}{1241}
\maketitle
\begin{abstract}
 When nodes can
repeatedly update their behavior (as in agent-based models from computational social science or repeated-game play settings) the problem of optimal network seeding becomes very complex. For a popular spreading-phenomena model of
binary-behavior updating based on thresholds of adoption among neighbors, we consider several planning problems in the design of \emph{Sticky Interventions}: when adoption decisions are reversible, the planner aims to find a Seed Set where temporary intervention leads to long-term behavior change. We prove that
completely converting a network at minimum cost is $\Omega(\ln (OPT) )$-hard to approximate and that maximizing conversion subject to a budget is $(1-\frac{1}{e})$-hard to approximate. Optimization heuristics which rely on many objective function evaluations may still be practical, particularly in relatively-sparse networks: we prove that the long-term impact of a Seed Set can be evaluated in $O(|E|^2)$ operations. For a more descriptive model variant in which some neighbors may be more influential than others, we show that under integer edge weights from $\{0,1,2,...,k\}$ objective function evaluation requires only $O(k|E|^2)$ operations. These operation bounds are based on improvements we give for bounds on time-steps-to-convergence under discrete-time reversible-threshold updates in networks.
\end{abstract}

\section{Introduction}
In the social and behavioral sciences there is a growing interest in the descriptive power of \emph{agent-based models}. The term \emph{agent-based} describes a broad class of models in which many independent agents repeatedly update their behavior in response to local interactions with other agents. After some time, emergent properties of the global system may be observed. Popular agent-based modeling tutorials by well-known social scientists and simulation researchers have been cited hundreds of times (such as \cite{RePEc:isu:genres:12515} and \cite{MacalNorth}).  Many features in agent-based models build on ideas that mathematicians and computer scientists have studied classically as \emph{cellular automata} or somewhat more recently as \emph{spreading-phenomena models}.

In fields where controlled experiments are very difficult
, agent-based models may offer a useful (if approximate) synthetic experiment
to shed light on the effects of proposed interventions.
Given an agent-based model,
a natural question follows: what does the model suggest about the form of the best interventions? What types of interventions will yield long-term improvements? Depending on the specification of agent behavior, and on how heterogeneously-structured the patterns of agent interaction are, this planning-oriented question can raise serious computational and algorithmic challenges. Most positive theoretical results we are aware of on maximizing influence in networks focus on a kind of ``one-shot" influence (adoption decisions are irreversible).
In this paper we focus on seeding interventions when nodes may repeatedly update their behavior (adoption decisions are reversible), as in most agent-based models.

In network settings, very few agent-based models proposed by social scientists yield optimization problems which have efficient accurate  algorithms\footnote{Or even satisfying approximation-algorithms which are efficient and provably near-optimal.}, so heuristic methods like genetic algorithms and local-improvement algorithms are a natural recourse. These heuristic methods may demand many function evaluations, and may require significant replication to ameliorate issues associated with settling at local optima while missing global optima. For example, running a local-improvement algorithm from 1,000 random initial solutions may give an experimenter substantial peace of mind. Unfortunately, for increasingly-complex agent-based simulations, even evaluating the predicted effect of an intervention may be quite slow/computationally costly (and not simply as a matter of poor implementation).  Some agent-based models may not even converge to a consistent ordering on the quality of interventions: how long should the agent behavior be allowed to evolve before the relative strength of two interventions is assessed?\footnote{This is particularly problematic as the choice of time-step interval is often somewhat artificial in agent-based models.} Optimizers and mathematicians interested in encouraging computationally-responsible reasoning in the social sciences (replication to reduce artifacts and the design of consistent evaluation metrics) should emphasize the variable computational costs associated with this diligence. Despite forbidding anecdotes, in many cases, replication and convergence to a consistent ordering are reasonable demands. In this paper, we prove hardness-of-approximation results for spreading a $\{0,1\}$-behavior with reversible adoptions, but we also give quadratic bounds on function evaluation and describe near-convergence from an arbitrary pattern of initial adoption.


Further, as
new sources of data from large online-social networks become available (e.g. variable strength of ties/variable sharing across links), what combinations of these these model features can be incorporated without jeopardizing manageable convergence (and compute) times? Researchers may preemptively limit descriptive power for fear that repeatedly evaluating ever-more complex simulations will be prohibitively slow for instances of real-world scale. Yet some powerful descriptive model features may carry only mild computational costs. For our model, adding a variable-integer-edge-weight feature causes only a modest increase in our function-evaluation bound. Including data-driven model features may increase the specificity of simulations describing real systems that social scientists care about.
We believe that there is a role for mathematical theory in helping scientists understand the computational costs of incorporating model features that capture new data sources. \\


\noindent\textbf{Our focus.}
Spreading-phenomena models of threshold-based binary-behavior in networks have a long history. Granovetter's foundational work in the sociology literature in the late 70s motivated computationally inquiry on the effect of variation in (often synthetic) network structure (see \cite{Granovetter:1978}). For example, Watts studied the distribution of long-term cascade sizes as a function of network structure under threshold-spread models (see \cite{export:164520}).
More recently, these models have been studied extensively in sociology as ``Complex Contagion." For example, see \cite{Complexcontagionsandtheweaknessoflongties}, and \cite{Cascadedynamicsofcomplexcontagion}. Threshold-based behavior updates also arise in behavioral economics models, as in \cite{Nyborg2006351}. Empirical work on Twitter data has suggested that the Complex-Contagion model is useful to understand spread of political activism in real-world contact networks (see \cite{Romero:2011:DMI:1963405.1963503}). Threshold decision rules also have provenance in game theory where they emerge as \emph{best-response strategies} for repeated game play in networks.\footnote{For background on game play in networks, see the widely-read textbook of \cite{David:2010:NCM:1805895}.}

We consider a model of binary $\{0,1\}$-behavior spreading in a network in which nodes update behavior based on a threshold of adoption among their neighbors. These threshold values are given explicitly in the input and may be heterogeneous. We assume that nodes will adopt a desirable behavior (Behavior 1) only when they observe enough of their neighbors choosing Behavior 1. If too few neighbors adopt Behavior 1, a node will adopt Behavior 0. As in the game-theoretic setting, we allow all agents to update concurrently and repeatedly.

If a planner can temporarily incentivize a small set of nodes to adopt Behavior 1, what does this threshold-based spread model suggest about form of highly-effective interventions? For a given budget there can exist a exponentially-large space of spatial interventions. Though the behavior-update process does not necessarily converge to a stable pattern, there is a somewhat satisfying notion of the long-term adoption caused by an intervention (this follows from a result of \cite{Dynamicsofpositiveautomatanetworks}).

We show that four natural variations of this intervention-planning problem are hard-to-approximate: no polynomial-time algorithms can exist without substantial compromise in the guaranteed quality of the solution. Can we still learn something about what these models predict about the form of effective interventions?


Given a behavior-spread model, nearly the bare minimum required to explore optimized interventions is the ability to repeatedly evaluate the objective function. For example, methods like local search algorithms and genetic algorithms involve repeated computation of the objective function value. Are heuristics which require many function evaluations reasonable? For discrete-time reversible-threshold binary-behavior updating we show  that evaluating the long-term effect of an intervention takes $O(|E|^2)$ operations where $E$ is the edge set for the network. The constant hidden by the big $O$ notation is relatively small (at most $8$ or $12$ for the most general cases), and for an edge-weighted variant (which can describe variable-strength ties in a social network) in which integer weights come from the set $\{1,2,3,...,k\}$ at worst function evaluations cost  $O(k|E|^2)$ operations, still with constant at most $8$ (or 12). The foundation of these bounds on compute time are new time-to-convergence results we prove for threshold automata in networks: behavior updating is guaranteed to converge in $2|E|+|V|$ time steps (or $2k|E|+|V|$ time steps for the edge-weighted case).
\vfill

\section{Model: Discrete-Time Reversible-Threshold Spread} \label{Modeldescrip}
\noindent \fbox{
\begin{minipage}{14cm}
\textbf{Input:} Finite network $G=(V,E)$. \\
Each node $i\in V$ has an integer threshold $b_i$ (the $b_i$ may be non-uniform). \\

\textbf{Initial State Vector:} At each time $t=\{0,1,2,3,...\}$, each $i\in V$ has one of two behaviors: Behavior 0 or Behavior 1. The \emph{Adoption Vector }, $x(t)\in \{0,1\}^{|V|}$, describes the pattern of adoption at time $t$. \footnote{If $x_i(t)=1$ this indicates node $i$ adopts behavior 1 at time $t$, etc.}  An initial adoption vector, $x(0)$, is given. Note that $\sum_{j\in V}x_{j}(0)$ may be non-zero.\\

\textbf{Planner Intervention at \emph{Seed Set:}} \\
The planner specifies a $d$-time-step intervention at subset  $V'\subseteq V$. \\
For all $i\in V'$, set $x_i(0)=1$.\\

\textbf{Evolution of Adoption:} \\
At each time $t\geq 1$, each $i\in V$ updates behavior:
\begin{itemize}
\item \textbf{Decision Rule:} If ($i\in V'$ AND $t\leq d-1$), then set $x_i(t)=1$.

\vspace{1mm}
Otherwise, node $i$ updates based on behavior of neighbors at $t-1:$\\
If $\sum_{j\in \delta(i)}x_{j}(t-1)\geq b_i$, then set $x_i(t)=1$.\\
If $\sum_{j\in \delta(i)}x_{j}(t-1)< b_i$, then set $x_i(t)=0$.
\end{itemize}
\end{minipage}}\\

\noindent These inputs and decision rule specify a sequence of binary vectors $x(0),x(1),x(2),...$ that describes the evolution of adoption of behavior in $G$.

Suppose that a planner can temporarily intervene to force a subset of nodes to adopt Behavior 1 for the $d$ time steps $\{0,1,2,...,d-1\}$ (at time $d$ these nodes resume normal updating). As other nodes update according to the decision rule, Behavior 1 may start to spread in the network. As is common, we refer to the subset of nodes where the planner intervenes as the \emph{Seed Set}. Subject to constraints on the size or cost of the Seed Set, our planner wants to choose the Seed Set which results in highest long-term adoption (long after the temporary intervention has ended). Qualitatively, the planner would like to choose a \emph{Sticky Intervention} for which $x(k)$ contains many 1s for arbitrarily large $k$. An intuitive definition of the \emph{Long-term Rate of Adoption} is
\[
\lim_{k\rightarrow \infty}\Big[\sum_{i\in V}x_i(k)\Big]
\]

Unfortunately, for the decision rule defined, $x(k)$ may not converge. When the limit above does not exist, $x(k)$ eventually alternates between two vectors (from \cite{Dynamicsofpositiveautomatanetworks})\footnote{For an example of a repeating cycle of length 2, see the appendix. In a computational setting, a useful stopping condition for function evaluation should incorporate this possible settling at a 2-cycle behavior.} , so define the \emph{Long-term Rate of Average Adoption} as
\[
\lim_{k\rightarrow \infty}\Big[\sum_{i\in V}\frac{1}{2}\Big(x_i(k)+x_i(k+1)\Big)\Big]
\]
Our planner seeks to maximize this well-defined objective function. There are two natural notions of what it means to \emph{intervene temporarily}.
\vspace{3mm}

\noindent\textbf{Duration of Intervention:}
\begin{itemize}
\item \textbf{temporary (t)}: The planner chooses a Seed Set and intervenes continuously until growth of Behavior 1 stops\footnote{Or, the growth of the 2-time-step average level of Behavior 1 stops.}, then stops intervening.
\item \textbf{fixed-duration(fd)}: The planner chooses a Seed Set and intervenes for $d$ consecutive time steps, then stops intervening.
\end{itemize}

\noindent \emph{Comment (Adoption is reversible):} In this model, nodes may update from Behavior $0$ to Behavior $1$, or from Behavior $1$ to Behavior $0$. In particular, updates from $1$ to $0$ may occur in cascades of Behavior 0 that start at $t=1$ based on an unstable pattern of adoption in the initial state vector, $x(0)$, or in cascades of Behavior 0 that start when the Seed Set $V'$ is no longer forced to Behavior 1.

\vspace{3mm}

\noindent\textbf{Planning Objective:}
\begin{itemize}
\item\textbf{Problem 1: Min-cost Complete Conversion (MCC)} What is the smallest cardinality Seed Set required to convert the entire network to Behavior 1 permanently?

\item \textbf{Problem 2: Budgeted Maximum Conversion (BMC)} If the Seed Set contains at most $k$ nodes, what is the maximum number of nodes that can be permanently converted to Behavior 1?
\end{itemize}

\section{Seeding the Stickiest Intervention is Hard to Approximate} \label{hardnesssec}

The following table summarizes our results on
hardness of approximation for the four variants of the seeding problem defined. \\

\noindent \textbf{Summary of Results on Hardness of Approximation:}\\
\noindent \begin{tabular}{|l|l|l|}\hline
& \textbf{temporary (t)} & \textbf{fixed-duration (fd)} \\ \hline
\textbf{Min-Cost} &    &   \\
\textbf{Complete Conversion}            &  $\Omega(\ln (OPT) )$)        & $\Omega(\ln (OPT) )$    \\
           &        &     (provided that $d\geq 2$)      \\ \hline
\textbf{Budgeted}&     &  \\
\textbf{Maximum Conversion}           &  $<(1-\frac{1}{e})$ $\approx 0.632$  &    $<(1-\frac{1}{e})$ $\approx 0.632$          \\
          &    &         (provided that $d\geq 2$)  \\ \hline
\end{tabular}
\vspace{4mm}

For Min-cost Complete Conversion, we prove that no polynomial-time algorithm guarantees a $O(\ln (OPT) )$-approximation (unless $NP$ has slightly superpolynomial time algorithms). This is by reduction from the Set Cover problem, for which Feige proved a $\ln (n) $ threshold for efficient approximation (see \cite{Feige:1998:TLN:285055.285059}).  Notably, the lower bound we give here is a function of the size of the optimal Seed Set.
For Budgeted Maximum Conversion, we prove that no polynomial-time algorithm can guarantee more than a $(1-\frac{1}{e})$-fraction of the optimal value. This is by reduction from the Maximum-Coverage Problem, where we again leverage a hardness result of Feige.

Before proving these hardness results, we comment on directions for positive algorithmic results.
Given our $(1-\frac{1}{e})$ inapproxamability result for Budgeted Maximum Conversion, readers may wonder whether the $(1-\frac{1}{e})$-approximation algorithm for the suggestively-named ``Linear-Threshold Model" of
\cite{Kempe05influentialnodes}
suggests a useful direction for the problems defined here. Unfortunately, the positive algorithmic result in Kempe, et al. relies on a very specialized model assumption about the form of uncertainty in threshold values. This assumption is used to prove that adoption is a submodular function of the set of seeds.\footnote{Kempe et al. assume that each node $i$ chooses a fractional threshold uniformly from $[0,1]$. Submodularity no longer holds if for arbitrary $\epsilon>0$,
each node $i$ chooses a fractional threshold uniformly from $[\epsilon,1]$. } Known results on maximizing submodular functions subject to budget constraints then immediately guarantee the success of certain greedy approaches to solution construction. In fact, submodularity-dependent analysis informs much of the theoretical computer science work on influence in networks (for example, also see \cite{Mossel:2010:SIS:1898187.1898191}). We emphasize that the objective in our problems is not submodular, and greedy approaches may fail dramatically for very simple examples.

Consider the \emph{Set Cover Problem} defined as follows. Let $S$ denote a set of elements $\{1,2,...,n\}$. Let $F$ denote a group of subsets of $S$ which we will denote $J_1,J_2,...,J_{|F|}$. Call
a set of indices $I$ a \emph{Set Cover} if
\[
\cup_{i\in I}J_i=S.\\
\]
The goal is to find a Set Cover of minimum cardinality.

\begin{theorem}
The Set Cover Problem can be reduced in polynomial time to an instance of Min-cost Complete Conversion (for any period of seeding $d\geq 2$). As a result, Min-cost Complete Conversion is $\Omega(\ln (OPT) )$ hard to approximate.
\end{theorem}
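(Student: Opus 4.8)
The plan is to construct, from an arbitrary Set Cover instance $(S,F)$, a network $G$ whose minimum complete-conversion seed set corresponds exactly to a minimum set cover, so that the $\Omega(\ln n)$ inapproximability of Set Cover (Feige) transfers. First I would build the vertex set in two layers: one ``subset-vertex'' $u_J$ for each $J\in F$, and one ``element-vertex'' $w_s$ for each $s\in S$. I would connect $u_J$ to $w_s$ precisely when $s\in J$. The intended correspondence is that seeding $\{u_J : J\in I\}$ should convert the whole network if and only if $I$ is a cover. The element-vertices should therefore have threshold $b_{w_s}=1$, so that a single adopting neighbor (one chosen subset containing $s$) suffices to turn $w_s$ on; since $I$ covers $S$, every $w_s$ has such a neighbor and switches to Behavior 1.

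The central difficulty, and the step I expect to be the main obstacle, is reversibility: because adoptions can revert from $1$ to $0$, I must guarantee that conversion is \emph{permanent} after seeding stops, and that seeding a non-subset-vertex is never advantageous. This requires a gadget that makes the ``on'' state self-sustaining once a cover is achieved, and that makes the subset-vertices the only cost-effective places to seed. The natural fix is to augment each subset-vertex $u_J$ with a supporting structure (for instance a small clique or a pendant set of auxiliary vertices, each with low threshold) so that once $u_J$ is seeded its local neighborhood mutually reinforces and $u_J$ stays on permanently without external forcing; dually, I would set thresholds on the element-vertices and any connector vertices high enough that they can never independently ignite a cover, forcing every minimal permanent-conversion seed to be (equivalent to) a choice of subset-vertices. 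I also need to confirm that the $d\ge 2$ seeding window is long enough for the element-layer to light up and feed back before the intervention ends, and that a single time step ($d=1$) would be insufficient, consistent with the theorem's hypothesis.

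Next I would verify the reduction is \emph{cost-preserving in both directions}. Given a cover $I$, seeding the corresponding subset-vertices (plus their $O(1)$-sized gadgets, whose size I fold into the per-subset cost or argue is dominated) converts everything permanently, so $OPT_{\text{MCC}} \le f(|I|)$ for an explicit affine $f$. Conversely, given any seed set achieving permanent complete conversion, I would argue that its restriction to the subset-layer must correspond to a cover (each $w_s$ needs a permanently-on subset-neighbor, else it reverts once forcing stops), and that substituting/charging auxiliary seeds back onto subset-vertices does not increase the count, yielding $|I^\ast| \le OPT_{\text{MCC}}$. Because $f$ is affine with bounded coefficients, a multiplicative $c\ln(OPT)$-approximation for MCC would yield a multiplicative $\Theta(\ln n)$-approximation for Set Cover.

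Finally I would invoke Feige's theorem: since Set Cover admits no polynomial-time $(1-\epsilon)\ln n$-approximation unless $NP \subseteq DTIME(n^{O(\log\log n)})$, and our reduction is polynomial-time and approximation-preserving up to constants, Min-cost Complete Conversion inherits the $\Omega(\ln(OPT))$ hardness bound (here $OPT$ is the optimal seed-set size, matching the theorem's phrasing of the lower bound as a function of the optimum). The cleanest write-up defers the gadget thresholds to an explicit construction and then checks the two inequalities above; I expect essentially all the real content to live in designing the gadget so that reversibility cannot be exploited to cheat the cover requirement.
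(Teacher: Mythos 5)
Your high-level skeleton matches the paper's reduction: a bipartite membership graph with a subset-vertex per $J\in F$ and an element-vertex per $s\in S$, element thresholds equal to $1$, a substitution (``charging'') argument showing any seed set can be converted to one supported on subset vertices, and an appeal to Feige's $\ln n$ hardness. But there is a genuine gap at exactly the point you flag as ``the main obstacle'': you never specify the subset-vertex thresholds, and you posit an auxiliary gadget (pendant cliques or low-threshold helper vertices) whose design you defer entirely. The paper needs no gadget at all. It sets the threshold of each subset node $x_{J_k}$ equal to its degree $|J_k|$. This one choice resolves both of your concerns simultaneously: (i) \emph{permanence} --- a seeded subset node turns all its threshold-$1$ element neighbors on at $t=1$; at $t=2$ the subset node sees \emph{all} of its neighbors on and therefore freely sustains Behavior 1 without forcing (this is why $d\ge 2$ is needed), and the subset node and its element neighbors then reinforce each other forever; and (ii) \emph{no cheating} --- a non-seeded subset node can only switch on after every one of its elements is already on, so element vertices can never ignite a cascade that bypasses the cover structure. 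By contrast, the gadget direction you sketch is dangerous: if subset vertices are given low thresholds so that local cliques can sustain them, then seeding a single element vertex would switch on every subset containing it, which in turn switches on all their elements, and the whole graph can convert from one seed regardless of the cover structure, destroying the correspondence with Set Cover. Your proposal, by your own admission, places ``essentially all the real content'' in the deferred gadget design, so as written it is an incomplete reduction rather than a proof.

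Two smaller points. First, folding gadget sizes into an affine cost $f$ is unnecessary once the thresholds are chosen as above: the paper's correspondence is exactly value-preserving (a cover of size $k$ gives a complete-conversion seed set of size $k$ and conversely), so an $\alpha$-approximation for MCC is an $\alpha$-approximation for Set Cover with no constant loss. Second, the theorem states hardness as a function of $OPT$, not of $n=|S|$; the paper closes this by observing that its massaging procedure gives $OPT\le |V_S|$, so $\Omega(\ln|V_S|)$-hardness implies $\Omega(\ln OPT)$-hardness. Your write-up asserts the $OPT$ phrasing parenthetically but gives no such bridging argument; you would need to include one.
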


\begin{proof} Given an arbitrary instance of the Set Cover Problem, construct an instance of Min-cost Complete Conversion as follows. For each element $i\in S$, create a node $x_i$. Denote this set of ``element nodes" by $V_S$. For each $J_k\in F$ create a node $x_{J_k}$. Denote this set of ``subset nodes" by $V_F$. Our constructed instance has node set $V=V_S\cup V_F$. For every $(i, J_k)$-pair with $i\in S$ and $J_k\in F$, if $i\in J_k$ then include edge $(x_i, x_{J_k})$ in edge set $E$. Notice that $(V,E)$ gives a bipartite graph. Let the threshold for each node $x_i\in V_S$ be 1. Let the threshold for each node $x_{J_k}\in V_F$ be the degree of $x_{J_k}$, which by construction is $|J_k|$.

\begin{figure}[t!]
\fbox{
\begin{minipage}{14 cm}
\vspace{4mm}

\[\begin{tikzpicture}
	\vertex (x1) at (-3,2) [label=below:$x_i$] {};
	\vertex (x2) at (-2,2) [label=below:$x_a$] {};
	\vertex (x3) at (-1,2) [label=below:$x_b$] {};
	\vertex (x4) at (-0,2) [label=below:$x_c$] {};
	\vertex (x5) at (2,2) [label=below:$x_z$] {};
	\vertex (x6) at (-2,5) [label=above:$x_{J_k}$] {};
	\vertex (x7) at (-1,5) [label=above:$x_{J_l}$] {};
	\vertex (x8) at (0,5) [label=above:$x_{J_m}$] {};
	\vertex (x9) at (2,5) [label=above:$x_{J_q}$] {};
	\path
		(x1) edge (x6)
		(x2) edge (x6)
		(x4) edge (x6)
		(x4) edge (x7)
		(x3) edge (x8)
		(x2) edge (x7)
		(x1) edge (x8)
		(x5) edge (x8)
		(x9) edge (x3)
	 ;
	
	 \node[align=center, below] at (-4.5,2.2){\huge$V_S$};
	 \node[align=center, below] at (-3.5, 5.5){\huge $V_F$\normalsize};

	 \node[align=center, below] at (1,2){$...$};
	 \node[align=center, below] at (1,5.5){$...$};
	
	 \node[align=left, below] at (6,4.8){and $S=\{i,a,b,c,...,z\}$.};
	 \node[align=left, below] at (6, 5.5){Where $F=\{J_k,J_l,J_m,...,J_q\}$,};
	
	 \node[align=center, below] at (6,2){For example, $J_k=\{i,a,c\}$, etc.};

\end{tikzpicture}\]
\caption{\textbf{Schematic of constructed bipartite graph.} The constructed instance of Min-cost Complete Conversion. Each node in $V_S$ has threshold 1. Each node in $V_F$ has threshold equal to the cardinality of the corresponding subset $J_k$. }
\end{minipage}
}
\end{figure}
We make two observations about this class of constructed instances. First, for \emph{any} seed set in the constructed instance, all updates to Behavior 1 that will ever occur happen by the end of $t=2$. To see this, consider $V_S$ and $V_F$ separately. Suppose $x_i\in V_S$ is not a seed: $x_i$ can only be converted to Behavior 1 if there exists some $J_k\in F$ with $i\in J_k$ that has $x_{J_k}$ a seed. Otherwise, $x_i$ is adjacent only to non-seed subset nodes: due to the construction of the thresholds for $V_F$, such non-seed subset nodes are only converted to Behavior 1 when \emph{every node corresponding to a contained element already has Behavior 1}. Since $x_i$ does not have Behavior 1, such a subset node will not be converted to Behavior 1 (and consequently will never cause nodes corresponding to its elements to adopt Behavior 1). Thus, any non-seed $x_i\in V_S$ that adopts Behavior 1 will do so because it is adjacent to a seed node. Since the threshold for $x_i\in V_S$ is 1, adoption of Behavior 1 must happen at $t=1$. Since any subset node is only adjacent to element nodes, if the final non-trivial updates for element nodes occur at $t=1$, then the final non-trivial updates for a subset node must be at $t=2$.

Second, suppose that a seed set contains a subset node $x_{J_k}$. If $x_{J_k}$ is forced to adopt Behavior 1 for at least $t\in \{0,1\}$ then $x_{J_k}$ must adopt Behavior 1 for all $t$. This follows from our construction of the thresholds. Since $x_{J_k}$ has Behavior 1 at $t=0$, at $t=1$ all element nodes adjacent to $x_{J_k}$ will have Behavior 1. As a result, at $t=2$, $x_{J_k}$ will freely choose Behavior 1 without being forced, and also, all element nodes adjacent to $x_{J_k}$ will freely choose to adopt Behavior 1 (since $x_{J_k}$ was forced to Behavior 1 at $t=1$). At later time steps updates are trivial: at $t$, $x_{J_k}$ observed all neighbors adopting Behavior 1 at $t-1$, and all element nodes adjacent to $x_{J_k}$ observed $x_{J_k}$ adopting Behavior 1 at $t-1$. Effectively, seeds which are subset nodes quickly ``self-stabilize," and do the same for their element-node neighbors.

Now solve the constructed instance of Min-cost Complete Conversion where $d\geq2$ (the Seed Set will at least be forced to Behavior 1 for $t\in\{0,1\}$). Call the returned Seed Set $Q$. We explain how to massage $Q$ to find a seed set that only contains subset nodes which has size at most $|Q|$ and still converts all of $V$ to Behavior 1. This massaged $Q$ will have a natural interpretation as a Set Cover for $S$ of cardinality at most $|Q|$.

Suppose that the seed set $Q$ contains $x_i\in V_S$. If all neighbors of $x_i$ are also seeds, then $x_i$ can be removed from $Q$ to obtain a strictly smaller seed set that converts all nodes by $t=2$ (the only change is that $x_i$ will now convert to Behavior 1 at $t=1$ causing at most a 1 time step delay in other updates to Behavior 1). Otherwise there exists some $x_{J_l}\notin Q$ with $i\in J_l$. In this case, massage $Q$ by removing $x_i$ and adding such a $x_{J_l}$. Any subset node whose adoption of Behavior 1 depended on $x_i$ still adopts Behavior 1 by $t=2$ (since now $x_i$ adopts Behavior 1 at $t=1$, as $i\in J_l$). Any other node updates altered by this substitution result in Behavior 1 at a node in the place of Behavior 0. Thus, the massaged version of $Q$ still converts all of $V$ to Behavior 1 by $t=2$ (by construction of the node thresholds, full adoption is stable for $t\geq 3$).
Repeat this removal/massaging procedure until $Q$ contains only nodes corresponding to subsets from $F$.

Now interpret the subsets corresponding to nodes of massaged $Q$ as a proposed set cover $I$ of cardinality at most $|Q|$. As verified above, the massaged seed set $Q$ still converts all nodes of $V$ to Behavior 1 by $t=2$. Since non-seed element nodes in $V_S$ are converted only by adjacency to subset nodes which are seeds (from our first observation), every element node in $V_S$ must be adjacent to some seed from massaged $Q$. By construction of edge set $E$, this means that all $i\in S$ appear in some subset indexed by $I$. Thus, $I$ is a set cover containing at most $|Q|$ sets from $F$.

Finally, the optimal value for the Set Cover problem cannot be strictly less than the Min-Cost Complete Conversion optimal value, since every set cover $L$ with
$\cup_{i\in L}J_i=S$
corresponds to a seed set in our constructed instance of Min-Cost Complete Conversion which converts all of $V$ to Behavior 1 (all element nodes are converted to Behavior 1 at $t=1$ due to the covering property, and consequently all non-seed subset nodes are converted at $t=2$). 

Due to the correspondence demonstrated between solutions for an arbitrary Set Cover instance and solutions of the same numerical value for our polynomially-constructed instance of Min-cost Complete Conversion (for arbitrary $d\geq 2$), an $\alpha$-approximation algorithm for Min-cost Complete Conversion immediately gives an $\alpha$-approximation algorithm for Set Cover. Thus Min-cost Complete Conversion inherits $\Omega(\ln n )$-hardness from the Set Cover Problem where $n=|S|$ (this hardness holds unless $NP$ has slightly superpolynomial time algorithms, see \cite{Feige:1998:TLN:285055.285059} for details). Notice that $|S|$ corresponds to $|V_S|$ in our constructed Min-Cost Complete Conversion instance: our massaging procedure shows that $OPT\leq |V_S|$, so $\Omega(\ln (|V_S|))$-hardness of approximation certainly implies $\Omega\ln (OPT))$-hardness of approximation for Min-Cost Complete Conversion.
\end{proof}

A similar reduction (which introduces dummy nodes to stabilize element nodes and creates a complete-subgraph gadget on $V_F$ to avoid permanent conversion of subset nodes) shows that the Budgeted Maximum-Coverage Problem can be reduced in polynomial time to our Budgeted Maximum-Conversion Problem. Again, Budgeted Maximum Conversion inherits a hardness due to \cite{Feige:1998:TLN:285055.285059}: unless $P=NP$ there is no approximation algorithm that can guarantee a solution strictly better than a $(1-\frac{1}{e})$ fraction of optimal.

\begin{theorem}\label{bmcis1minus1overe}
The Budgeted Maximum Coverage Problem can be reduced in polynomial time to an instance of Budgeted Maximum Conversion (for any intervention length $d\geq 2$). As a result, Budgeted Maximum Conversion is $(1-\frac{1}{e})$-hard to approximate.
\end{theorem}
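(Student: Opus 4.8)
The plan is to mirror the reduction behind Theorem~1, adapting the bipartite gadget so that (i) the number of \emph{permanently} converted nodes tracks the number of \emph{covered} elements, and (ii) seeded subset nodes contribute nothing to the long-term count. Recall the Budgeted Maximum Coverage Problem: given elements $S$, a family $F$ of subsets, and a budget $k$, choose at most $k$ subsets from $F$ maximizing the number of covered elements; Feige's bound makes this $(1-\tfrac{1}{e})$-hard. I would build the same bipartite graph on $V_S \cup V_F$ with an edge $(x_i, x_{J_k})$ whenever $i \in J_k$, keep every element node at threshold $1$, and carry over the same budget $k$.

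Two gadgets do the real work. First, to \emph{stabilize} element nodes I would attach to each $x_i$ a private ``dummy'' partner $w_i$ of threshold $1$, joined only to $x_i$. Once $x_i$ is switched on (at $t=1$ by an adjacent seeded subset node), $w_i$ turns on at $t=2$, after which the pair $(x_i, w_i)$ is mutually self-sustaining and stays at Behavior~$1$ forever --- this is what converts a one-shot covering event into a permanent conversion. Second, to \emph{prevent permanent conversion of subset nodes} I would turn $V_F$ into a clique and set the threshold of each $x_{J_k}$ equal to its full degree $|J_k| + (|F|-1)$. A subset node can then be stable only if every one of its neighbors --- all its elements \emph{and} all other subset nodes --- is simultaneously on; since the budget permits at most $k < |F|$ subset seeds, this is unreachable once forcing ends, so every seeded subset node reverts to Behavior~$0$ and every non-seeded subset node never activates at all. (Assuming $k<|F|$ is without loss, as $k\ge|F|$ makes coverage trivial.)

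With the construction fixed, the key step is the dynamics calculation for $d\ge 2$: seeding the subset nodes indexed by a family $I$ forces, at $t=1$, exactly the element nodes covered by $I$; at $t=2$ their dummies switch on while every subset node drops out; and from $t=3$ on the configuration is frozen, with precisely the covered elements and their dummies at Behavior~$1$ and everything else at Behavior~$0$. Because the process reaches a fixed point (no $2$-cycle), the long-term average objective equals $2\cdot(\text{number of elements covered by }I)$. A massaging argument identical in spirit to Theorem~1 --- swap any seeded element or dummy node for an as-yet-unseeded subset covering it (weakly increasing coverage and respecting the budget) --- shows that seed sets consisting only of subset nodes are without loss of optimality, so the optimum of the constructed Budgeted Maximum Conversion instance is exactly $2$ times the optimal coverage value.

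This value correspondence is exactly multiplicative (the factor $2$ cancels in a ratio), so an $\alpha$-approximation for Budgeted Maximum Conversion, after massaging its output to subset-only seeds, yields a family covering at least an $\alpha$-fraction of the optimal coverage; inheriting Feige's bound gives the claimed $(1-\tfrac{1}{e})$-hardness. I expect the main obstacle to be the simultaneous timing requirements on the two gadgets: the threshold on the subset clique must be high enough to force reversion of every seeded subset node the instant forcing stops, yet the dummy partners must latch on one step \emph{earlier} so that the covered elements survive the loss of their subset support --- verifying that these one-step races resolve correctly for every $d\ge 2$, and that no unintended cascade ever lights up a non-seeded subset node, is the delicate part of the argument.
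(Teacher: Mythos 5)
Your construction is essentially the paper's own: the same private dummy partner of threshold $1$ attached to each element node (so that covered elements latch permanently via the mutually sustaining pair), the same clique on $V_F$ with each subset node's threshold set to its full degree $|J_k|+|F|-1$ (so that subset nodes cannot stay converted), the same massaging of element/dummy seeds into subset seeds, and the same exact factor-$2$ value correspondence that transfers Feige's $(1-\tfrac{1}{e})$ bound. The dynamics you describe (covered elements on at $t=1$, dummies at $t=2$, frozen thereafter) match the paper's two observations.

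There is, however, one concrete slip: your budget assumption $k<|F|$ is not strong enough; the paper assumes $k<|F|-1$, and this matters. Take $k=|F|-1$ with all seeds being subset nodes, and suppose the one unseeded subset node $x_{J_k}$ has all of its elements covered by the seeds. At $t=2$, $x_{J_k}$ sees its $|J_k|$ element neighbors on (from $t=1$) plus the $|F|-1$ forced seeds, i.e.\ exactly $|J_k|+|F|-1$ on-neighbors, so it \emph{freely} activates. For $d\geq 3$ this is fatal: when forcing ends at $t=d$, each seeded subset node $x_{J_l}$ sees its $|J_l|$ elements, the other $|F|-2$ seeds, and the freely-activated $x_{J_k}$, again meeting its threshold, so the whole clique self-sustains and every subset node is permanently converted. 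This breaks your claim that the long-term value is exactly $2\cdot(\text{elements covered})$, and hence the ratio argument, on those instances. The repair is exactly the paper's: assume $k\leq |F|-2$ (then the count of possibly-on neighbors is at most $|J_k|+k<|J_k|+|F|-1$ at \emph{every} time step, not just after forcing ends), and dispose of budgets $k\geq |F|-1$ separately by polynomial-time enumeration, which is still trivially without loss for the hardness conclusion.
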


\noindent The details of this proof appear in the appendix.

\section{Computing the Effect of Intervention:\\
 Long-Term Average Adoption Rate}

Given our hardness results in the previous section, we turn our attention to the feasibility of optimization heuristics. Almost the bare minimum required by an optimization heuristic is the ability to repeatedly evaluate the objective function associated with a candidate feasible solution. In our case, what is the long-term effect of a particular seed set?  If function evaluations are too-expensive computationally, it may limit the size of instances where researchers can reasonably conduct computational studies. For Discrete-time Reversible-threshold binary behaviors we give an upperbound on the number of operations required to compute the Long-Term Average Adoption Rate which is quadratic in the cardinality of the edge set:

\begin{theorem}\label{computetime}
The long-term average adoption rate can be computed in $O(|E|^2)$ operations.
\end{theorem}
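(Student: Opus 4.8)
The plan is to evaluate the objective by directly simulating the update dynamics for a provably bounded number of time steps and then reading the long-term average off of the final configurations. Two quantities control the running time: the cost of a single synchronous update, and the number of updates we must perform before the dynamics settle. I would prove the bound by establishing an $O(|E|)$ estimate for each, so that their product is $O(|E|^2)$.

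First I would bound the cost of one synchronous update. To compute $x(t)$ from $x(t-1)$, each node $i$ forms the sum $\sum_{j \in \delta(i)} x_j(t-1)$ and compares it against its threshold $b_i$ (overriding this with Behavior 1 when $i \in V'$ and $t \le d-1$). Forming node $i$'s neighbor-sum costs $O(\deg(i))$, so by the handshake identity one full update costs $\sum_{i \in V} O(\deg(i)) = O(|E|)$, plus $O(|V|)$ for the threshold comparisons and seed-forcing. Isolated vertices never change behavior after the first step, since their neighbor-sum is identically $0$; I would strip them in an $O(|V|)$ preprocessing pass and record their fixed contribution to the average. On the remaining graph every vertex has degree at least one, whence $|V| \le 2|E|$ and a single update costs $O(|E|)$.

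Second, and this is where the real content lies, I would bound the number of updates needed. The key ingredient is the time-to-convergence guarantee stated in the introduction: under these reversible-threshold updates the sequence $x(0), x(1), x(2), \dots$ reaches a fixed point or enters a $2$-cycle within $2|E| + |V|$ steps, which on the reduced graph is $O(|E|)$ steps. The natural route to this bound is a Lyapunov/energy-function argument in the spirit of Goles--Olivos for symmetric threshold automata: one exhibits a potential that is monotone along the synchronous orbit, whose total variation is $O(|E|)$, and which can be stationary only on a fixed point or $2$-cycle. Since the stated convergence guarantee already applies to the seeded process, the same $O(|E|)$ step count covers both the temporary and fixed-duration regimes. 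I expect establishing this convergence bound, rather than the simulation bookkeeping, to be the principal obstacle.

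Finally I would assemble the pieces. Simulate for $T = 2|E| + |V| + 1$ steps, after which $x(T-1)$ and $x(T)$ realize the limiting fixed point (in which case they agree) or the limiting $2$-cycle (in which case they are the two alternating vectors). In either case the long-term average adoption rate equals $\sum_{i \in V} \tfrac{1}{2}\bigl(x_i(T-1) + x_i(T)\bigr)$, combined with the recorded isolated-vertex contribution, and is computed in $O(|V|) = O(|E|)$ operations. The total cost is $O(|E|)$ steps times $O(|E|)$ per step plus lower-order preprocessing, i.e. $O(|E|^2)$, as claimed.
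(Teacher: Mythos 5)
Your overall decomposition---$O(|E|)$ operations per synchronous step via the handshake identity, multiplied by an $O(|E|)$ bound on the number of steps supplied by a Goles-style energy argument---is exactly the paper's proof. The paper likewise derives this theorem as an immediate corollary of its separately-proved convergence bound ($2|E|+|V|$ steps, obtained by sharpening the Goles--Olivos potential using a monotonicity lemma), so citing that bound rather than reproving it is not the issue, and your isolated-vertex preprocessing is a harmless extra refinement.

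The gap is in your assembly step. The convergence theorem applies to the \emph{free} dynamics, but the seeded process is time-inhomogeneous: nodes in $V'$ are pinned at Behavior 1 for $t \le d-1$ and only then revert to threshold updating, so the bound does not ``already apply to the seeded process'' as you assert. Simulating $T = 2|E|+|V|+1$ steps and reading the average off $x(T-1)$ and $x(T)$ fails whenever the intervention outlasts your horizon: for fixed-duration seeding with $d > T$ your procedure reports the 2-cycle of the \emph{forced} dynamics, which can strictly overestimate the long-term average because Behavior 1 may erode once forcing ends. This is why the paper's convergence table lists $d + 2|E| + |V|$ for the fixed-duration regime and $2(2|E|+|V|)$ for the temporary regime from arbitrary $x(0)$: each phase---forced, then free---needs its own convergence window. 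The repair preserves your $O(|E|^2)$ count: run the forced dynamics (equivalently, free dynamics with seed thresholds set to $0$, to which the same energy argument applies) until they enter a 2-cycle, within $2|E|+|V|$ steps; if $d$ exceeds this, determine $x(d)$ from the parity of the remaining forced steps instead of simulating them; then run the free dynamics from $x(d)$ for at most $2|E|+|V|$ further steps and read the average off the last two vectors. That keeps the simulated step count at $O(|E|)$ even for arbitrarily large $d$, which is what your ``same step count covers both regimes'' sentence silently assumes but does not justify.
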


Theorem \ref{computetime} follows immediately from new bounds we give in Section \ref{proofofconvsec} on the number of time steps required before guaranteed convergence to the long-term average adoption rate. In each time step, the number of operations required to update all nodes according to adoption in their neighborhoods is $O(|E|)$, as each node $x$ must sum over $|\delta(x)|$ terms (for a total of $2|E|$ terms over all nodes).

The bound in Theorem \ref{computetime}  is for the more complicated cases: fixed-duration intervention, and temporary intervention from an arbitrary initial state vector $x(0)\in \{0,1\}^{|V|}$. For a temporary intervention maintained until growth of Behavior 1 stops starting from initial state with $||x(0)||=0$, convergence in $O(|V|)$ time steps to a stable pattern of long-term adoption is obvious: the evolution of adoption may be cleanly divided into a first phase is which adoption of Behavior 1 is growing (during the intervention), and a second phase in which adoption of Behavior 1 is being eroded (after the intervention). Each phase lasts at most $|V|$ time steps: if a time step elapses with only trivial behavior updates (each node maintains their behavior from the previous time step) then no non-trivial updates can possibly occur in future time steps.

In the the more complicated cases, adoption of Behavior 1 and erosion of Behavior 1 can occur simultaneously in $G$. Bounding time-until-convergence is significantly more subtle for these cases.\footnote{Note that when $||x(0)||=0$, the temporary variant corresponds to a special case of the fixed-duration variant when $d\geq n$.} Further, in these  cases the long-term rate of adoption may not be well defined. Instead, a simple long-term average rate may be reliably computed after convergence to a cycle that oscillates between 2 adoption vectors. We summarize our results in the table below.\\

\noindent \textbf{Summary of Results on Stability and Convergence to Long-term Average Adoption:}\\
\noindent \begin{tabular}{|l|l|l|}\hline
&\textbf{Discrete-Time Reversible-} & \textbf{Weighted-Neighbor }\\
&\textbf{Threshold Seeding (DRSeed)} & \textbf{Variant of DRseed} \\\hline
\hline
\textbf{temporary (t)} & converges to: stable adoption vector     & converges to: stable adoption vector  \\
 \emph{Given $||x(0)||=0$}          & convergence time bound: $2|V|$ & convergence time bound: $2|V|$            \\ \hline
\hline
\textbf{temporary (t)}&  converges to: 2-cycle  (from \cite{Dynamicsofpositiveautomatanetworks})   & converges to: 2-cycle  (from \cite{Dynamicsofpositiveautomatanetworks}) \\
  \emph{From arbitrary $x(0)$}        & convergence time bound: $2(2|E|+|V|)$ & convergence time bound: $2(2k|E|+|V|)$         \\

        &  & \emph{(for edge-weights from $\{0,1,2,...,k\}$) }       \\ \hline
\hline

\textbf{fixed-duration (fd)}&  converges to: 2-cycle  (from \cite{Dynamicsofpositiveautomatanetworks})   & converges to: 2-cycle  (from \cite{Dynamicsofpositiveautomatanetworks}) \\
  \emph{From arbitrary $x(0)$}         & convergence time bound: $d+2|E|+|V|$ & convergence time bound: $d+2k|E|+|V|$         \\

        &  & \emph{(for edge-weights from $\{0,1,2,...,k\}$) }       \\\hline
\end{tabular}

\vspace{3mm}

Before the proofs for the fixed-duration cases we make the following comment.  How much could our convergence bounds be improved? Though the adoption vector evolves in an exponentially-large space, for temporary interventions our bound on convergence time is linear in the number of nodes. In the more general fixed-duration setting we give an upperbound on convergence time that is linear in the number of edges in the network. Considering a network which is a simple line graph shows that the best possible upper bound on adoption-vector convergence time is $|V|$. We note that the average degree in social networks is often bounded by a constant\footnote{This is often understood as a consequence of Dunbar's social brain hypothesis (see \cite{DUNBAR1992469}).}: for such networks all upper bounds given are linear in the number of individuals in the network, so that at most a constant-factor improvement could be given.

In the next section, we harness special properties of our model to tighten a convergence argument given in \cite{Dynamicsofpositiveautomatanetworks} for a more general class of threshold automata. In particular,
our bound is linear in the size of the edge set of the network regardless of degree distribution (whereas \cite{Dynamicsofpositiveautomatanetworks} stated looser bounds and only claimed linear convergence time for uniform-degree networks).
Our proof extends almost immediately to the case in which edges have weights from the set $\{0,1,2,...,k\}$ for constant $k$.

\subsection{Convergence from an Arbitrary Pattern of Initial Adoption}\label{proofofconvsec}

For adoption vector $x(t)\in \{0,1\}^{|V|}$, let $N_{x(t)}$ denote the set of indices $i$ for which $x_i(t)=1$. The following lemma describes a basic property of our economic-threshold decision rule (which is not true of the more general class Goles studied). We establish it now for use near the end of our convergence-bound proof.

\begin{lemma} (Monotonicity of the future adoption set in the current adoption set)\label{monoton}
Let $x(t)$ and $x'(t)$ denote two adoption vectors at time $t$. Let $x(T)$ denote the result of applying the update rule to $x(t)$ for $T-t$ steps, and let  $x'(T)$ denote the result of applying the update rule to $x'(t)$ for $T-t$ steps.   If $N_{x(t)}\subseteq N_{x'(t)}$, then for all $T>t$, $N_{x(T)}\subseteq N_{x'(T)}$.
\end{lemma}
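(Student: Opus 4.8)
The plan is to reduce the $(T-t)$-step claim to a single update step and then iterate. Since $x(T)$ and $x'(T)$ are each obtained by applying the \emph{same} update rule $T-t$ times to $x(t)$ and $x'(t)$ respectively, it suffices to prove the one-step statement: if $N_{x(t)}\subseteq N_{x'(t)}$ then $N_{x(t+1)}\subseteq N_{x'(t+1)}$. The full lemma then follows by a routine induction on the number of elapsed steps, with the one-step result serving as the inductive step (the hypothesis $N_{x(s)}\subseteq N_{x'(s)}$ propagates to $N_{x(s+1)}\subseteq N_{x'(s+1)}$, and so on up to $T$).

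For the single step I would argue node-by-node. Observe first that the hypothesis $N_{x(t)}\subseteq N_{x'(t)}$ is precisely the statement that $x_j(t)\leq x'_j(t)$ for every $j\in V$, i.e. $x(t)$ is below $x'(t)$ in the coordinatewise order on $\{0,1\}^{|V|}$. Now fix any $i$ with $i\in N_{x(t+1)}$, that is $x_i(t+1)=1$, and show $x'_i(t+1)=1$. There are two cases according to the decision rule. If $i$ is a forced seed at this step (namely $i\in V'$ and $t+1\leq d-1$), then both updates set the $i$-th coordinate to $1$ regardless of the neighbor configuration, so $x'_i(t+1)=1$ immediately. Otherwise $i$ updates by threshold, so $x_i(t+1)=1$ means $\sum_{j\in\delta(i)}x_j(t)\geq b_i$; since $x_j(t)\leq x'_j(t)$ termwise, we obtain $\sum_{j\in\delta(i)}x'_j(t)\geq\sum_{j\in\delta(i)}x_j(t)\geq b_i$, and the identical threshold rule forces $x'_i(t+1)=1$. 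Hence $N_{x(t+1)}\subseteq N_{x'(t+1)}$, completing the step.

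The essential content is simply that the rule ``adopt iff at least $b_i$ neighbors have adopted'' is monotone nondecreasing in the neighbor-adoption vector, and a composition of monotone maps is monotone, which is exactly what the induction formalizes. I expect the only point needing genuine care is the bookkeeping at the forcing step: one must confirm that both evolutions are driven by the same $V'$, the same $d$, and the same time-indexing, so that forced coordinates agree in the two trajectories. Once that is checked, the threshold inequality does all the work. This is also why the lemma is flagged as special to our model rather than Goles's broader class: for general threshold automata (for instance, rules that are not monotone in the count of adopting neighbors) the inequality $\sum_{j\in\delta(i)}x'_j(t)\geq\sum_{j\in\delta(i)}x_j(t)$ would no longer imply the desired coordinatewise domination after updating, so this argument would not go through.
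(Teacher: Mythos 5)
Your proof is correct and follows essentially the same route as the paper's: an induction over time steps whose inductive step is the node-by-node observation that $x_i(T)=1$ requires at least $b_i$ adopting neighbors under $x(T-1)$, which the containment $N_{x(T-1)}\subseteq N_{x'(T-1)}$ transfers to $x'(T-1)$, forcing $x_i'(T)=1$. The only difference is cosmetic: you separate out the forced-seed case and phrase the argument as composition of monotone maps, while the paper treats the pure threshold update directly; the substance is identical.
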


\begin{proof} \emph{(of Lemma \ref{monoton})} By induction, starting at time $t$.  From the assumptions of the theorem, $N_{x(t)}\subseteq N_{x'(t)}$.  Induction hypothesis: suppose that for time $T-1$ we have $N_{x(T-1)}\subseteq N_{x'(T-1)}$. Consider time step $T$. For each node $i\in V$:
if $i \in N_{x(T)}$, then it must be the case that at least $b_i$ neighbors of $i$ are in $N_{x(T-1)}$. Since all nodes which are in $N_{x(T-1)}$ are also in $N_{x'(T-1)}$ (by the induction hypothesis), at least $b_i$ neighbors of $i$ are in $N_{x'(T-1)}$. By the update rule for $i$, we get that $x_i'(T)=1$, so that $i\in N_{x'(T)}$. Thus, $N_{x(T)}\subseteq N_{x'(T)}$. 
\end{proof}

\begin{theorem} \label{converged}\textbf{(Main Theorem: Convergence Bound for Reversible Economic-Threshold Spread)} \\
In graph $G=(V,E)$, given an arbitrary initial adoption vector $x(0)\in \{0,1\}^{|V|}$:\\
within $2|E|+|V|$ time steps the evolving adoption vector will converge to a cycle of length at most 2.
\end{theorem}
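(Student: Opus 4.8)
The plan is to exhibit a Goles-style energy (Lyapunov) functional on consecutive pairs of adoption vectors that is non-increasing and integer-valued, bound the number of time steps on which it strictly decreases, and then invoke the monotonicity of the update rule (Lemma \ref{monoton}) at the very end to absorb the remaining steps and to certify genuine convergence to a cycle of length at most $2$. Writing $f_i(t)=\sum_{j\in\delta(i)}x_j(t)$ for the number of adopting neighbours of $i$ at time $t$, I would define
\[ E(t) \;=\; -\sum_{\{i,j\}\in E}\big(x_i(t)x_j(t-1)+x_j(t)x_i(t-1)\big)\;+\;\sum_{i\in V}b_i\big(x_i(t)+x_i(t-1)\big). \]

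The first step is the decrement identity. Using the symmetry of the edge term, a direct computation gives
\[ E(t)-E(t+1)\;=\;\sum_{i\in V}\big(x_i(t+1)-x_i(t-1)\big)\big(f_i(t)-b_i\big). \]
A short sign analysis of the threshold rule shows every summand is non-negative: if $f_i(t)\ge b_i$ then $x_i(t+1)=1$, so the first factor is $\ge 0$ and so is the second; if $f_i(t)<b_i$ then $x_i(t+1)=0$, so the first factor is $-x_i(t-1)\le 0$ while the second is negative. Hence $E$ is non-increasing, and since it is integer-valued, any step with $E(t)>E(t+1)$ drops it by at least $1$. Thus the number of such \emph{strictly decreasing} steps is at most the total variation of $E$ along the trajectory.

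The crux, and the place where I expect the real work, is to bound this total variation by $2|E|+|V|$ rather than by the naive $2|E|+2\sum_i b_i$ (the degree-dependent estimate implicit in the more general treatment of \cite{Dynamicsofpositiveautomatanetworks}). Here I would exploit the economic-threshold structure. First reduce to \emph{effective} thresholds $1\le b_i\le\deg(i)$: a node with $b_i\le 0$ is permanently $1$ and a node with $b_i>\deg(i)$ permanently $0$ after a single update, so such nodes contribute only a fixed offset. Then use the on-trajectory relation that an adopting node has met its threshold, $x_i(t)=1\Rightarrow f_i(t-1)\ge b_i$, to charge the threshold term against the edge term and cap the attainable range of strictly decreasing steps at the degree-independent $2|E|$. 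The remaining difficulty is the weak inequality in the decision rule: a summand above also vanishes when $f_i(t)=b_i$ exactly, so $E$ can be flat while some off-node still switches on ``for free'' at its threshold. On any such flat step the analysis forces $N_{x(t-1)}\subseteq N_{x(t+1)}$ (no adopting node is lost, and some may be gained); by Lemma \ref{monoton} this containment propagates under the update, so the even- and odd-indexed subsequences of supports are nondecreasing and can only grow $|V|$ times in total. This is precisely where the additive $+|V|$ comes from.

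Finally I would assemble the pieces: after at most $2|E|$ strictly decreasing steps and at most $|V|$ tie steps, no node changes between times $t-1$ and $t+1$, i.e. $x(t+1)=x(t-1)$; determinism of the update rule propagates this equality forward, so the adoption vector is periodic with period dividing $2$, giving convergence to a cycle of length at most $2$ within $2|E|+|V|$ steps as claimed in Theorem \ref{converged}. The same energy with the edge term scaled by $k$ should handle weights from $\{0,1,\dots,k\}$ and yield the $2k|E|+|V|$ bound. The single hardest point is the degree-independent range estimate, together with the careful interleaving of strict and tie steps; everything past that is bookkeeping.
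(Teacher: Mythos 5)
Your overall architecture is the same as the paper's: your energy is exactly the Goles functional the paper uses (your $E(t)$ equals the paper's $E(x(t-1))=\langle b,x(t-1)\rangle-\langle Ax(t-1)-b,\,x(t)\rangle$, just index-shifted), your decrement identity is the paper's identity, and your idea of using Lemma \ref{monoton} to turn a ``nothing was lost'' step into two monotone (even- and odd-indexed) chains of supports is precisely the mechanism of the paper's Lemma \ref{unlessclose}. The pieces you sketch correctly --- non-negativity of the decrement, the on-trajectory range bound of $2|E|$, and determinism promoting $x(t+1)=x(t-1)$ into a 2-cycle --- correspond to the paper's Lemmas \ref{decreasingunless2cycle}, \ref{range}, and \ref{enterscycle}.

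The genuine gap is in your accounting of tie steps, and it is exactly a factor of $2$. Your claim that ``the even- and odd-indexed subsequences of supports are nondecreasing and can only grow $|V|$ times in total'' is unjustified: there are \emph{two} chains, each a nondecreasing sequence of subsets of $V$, so each can grow up to $|V|$ times, and a single node can be added to \emph{both} chains (this happens whenever an oscillating node later becomes stably adopting; e.g., on a path with all thresholds $1$ seeded at two adjacent endpoints of the path, every step of the trajectory is a tie step in your sense, and every node eventually enters both chains). So the bound your argument actually delivers is $2|V|$ tie steps, and since tie steps pay nothing against your integer-valued energy, your assembly proves convergence within $2|E|+2|V|$ steps --- weaker than the theorem's $2|E|+|V|$. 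The paper closes exactly this hole with a device you did not use: replace every $b_i$ by $b_i-\tfrac{1}{2}$ (this changes no trajectory), making the energy half-integer-valued and \emph{strictly} decreasing by at least $\tfrac{1}{2}$ on every pre-convergence step, so flat steps do not exist at all. The monotone-chain argument is then used not to bound flat steps but to show that ``slow'' steps (decrement exactly $\tfrac{1}{2}$, equivalently $x(t)$ and $x(t+2)$ differing in one coordinate) are confined to the final $2|V|$ steps before convergence; since those slow steps still each consume $\tfrac{1}{2}$ of the $2|E|$ range, the count becomes $2|V|+(2|E|-|V|)=2|E|+|V|$. To repair your proof, either adopt this half-integer shift (the clean fix, which also carries over to the weighted case), or supply a proof of the $|V|$ bound on total tie steps --- which your present argument does not give and which, as stated about the two chains, is false.
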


Next, we explain the proof of this main theorem, which requires a number of intermediate lemmas.
The decision rule of our model
is a special case of the general threshold-based update rule analyzed in \cite{Dynamicsofpositiveautomatanetworks}. We follow Goles analysis closely, but by exploiting the monotonicity of our restricted case (established in Lemma \ref{monoton}), and emphasizing a more combinatorial
description of a key function, we give significantly tighter results for our model.

 Proceeding forward, we use the fact (from \cite{Dynamicsofpositiveautomatanetworks}) that for a specified set of integer thresholds denoted $b_i$ for $i \in V$  (our model as described until now), replacing $b_i$ with $b_i-0.5$ for all $i\in V$ gives an update procedure indistinguishable from updating that  uses the original $b_i$.\footnote{E.g. replacing a threshold of 3 at node $i$ with a threshold of 2.5 changes nothing about how the adoption vector will change.} Thus, without loss of generality, we assume all thresholds are half-integer.

The following lemma is directly from \cite{Dynamicsofpositiveautomatanetworks}, but we include a proof for completeness.

\begin{lemma} \label{enterscycle}(Adoption enters a cycle, \cite{Dynamicsofpositiveautomatanetworks})
Given any $x(0)\in \{0,1\}^{|V|}$,
there exists an integer  $c$ with the property that $x(t)=x(t+c)=x(t+2c)=...$ and that $x(t)$ is not equal to any of $x(t+1), x(t+2),...,x(t+c-1)$ for all t above some \emph{Transient Time } $T$.
\end{lemma}

\begin{proof} \emph{(of Lemma \ref{enterscycle})} The space of possible adoption vectors, $\{0,1\}^{|V|}$, is finite. Thus, there exists a time step at which some adoption vector $y\in\{0,1\}^{|V|}$ occurs for the second time. Since the update process is deterministic, the behavior from that point forward will be identical to the evolution after the first occurrence of $y$, giving a cycle.
\end{proof}

\noindent We'll define and analyze a special function $E(x(t))$. In contrast to the analysis in \cite{Dynamicsofpositiveautomatanetworks}, we describe $E(x(t))$ in terms of the update process in the network. To do this we introduce the idea of \emph{sightings}. Given an adoption vector $y \in \{0,1\}^{|V|}$, we say that node $i$ \emph{sights} each of its neighbors which is 1 according to y.\footnote{In terms of our update process:
are the number of sightings $i$ makes at $x(t)$ greater than
$b_i$? If so, then $i$ is 1 at time $t+1$.} The expression for $E(x(t))$ will include two terms related to sightings.\\

\noindent \textbf{Sightings Necessary to get $x(t)$: } \\
If $x(t)$ occurs during our updating process, we can give a bound on how many sightings must have happened
in the graph at time $t-1$: if $i$ is 1 at $t$, it must have sighted at least $b_i$ neighbors which were 1.
Thus, throughout the graph at least
\[
\sum_{i=1}^{|V|}b_ix_i(t) = \langle b,x(t) \rangle =  \mbox{(Sightings Necessary to get $x(t)$) }
\]
sightings must have happened at $t-1$. This counts accurately: $b_i$ is included exactly when $x_i(t)$ is 1.\\

\noindent \textbf{Sightings wasted in turning on $x(t+1)$:}\\
The adoption vector $x(t)$ produces the next state $x(t+1)$:
each node $i$ that is
1 in $x(t+1)$ saw at least $b_i$ sightings in $x(t)$, but $i$ may also have sighted
some extra neighbors at 1 beyond the $b_i$ that were required. We say \emph{these sightings were wasted in turning on $x(t+1)$}. Let $A_i$ denote the $i$th row of the adjacency matrix for $G$.\footnote{The adjacency matrix of $G$ is the $|V|\times|V|$ matrix that has $A_{ij}=1$ exactly when the edge $(i,j)$ is in $G$ and has all other entries 0.} The number of sightings that were wasted at node $i$ is
$
A_ix(t)-b_i$.

So, summing over $i$:
\begin{align}
\sum_{i=1}^{|V|} (A_ix(t)-b_i)(x_i(t+1)) &= \langle (Ax(t)-b), x(t+1) \rangle \\
&= \mbox{(Sightings wasted in turning on $x(t+1)$)}
\end{align}

\noindent This counts the correct quantity because $A_ix(t)-b_i$ is included in the sum exactly when $x_i(t+1)=1$. \\

\noindent \textbf{Defining $E(x(t))$:}
\begin{align}
E(x(t)):&= \mbox{(Sightings Necessary to get $x(t)$) } - \mbox{ (Sightings wasted in turning on $x(t+1)$)}\\
&= \langle b,x(t)\rangle-\langle (Ax(t)-b), x(t+1) \rangle
\end{align}

\begin{lemma} \label{decreasingunless2cycle} \cite{Dynamicsofpositiveautomatanetworks}
If the adoption vector has not entered a 2-cycle (aka, assuming $x(t)\neq x(t+2)$), then $E(x(t))$ is decreasing:
\begin{displaymath}
E(x(t+1))+0.5 \leq  E(x(t)).
\end{displaymath}
\end{lemma}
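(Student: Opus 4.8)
The plan is to compute the difference $E(x(t)) - E(x(t+1))$ directly from the definition of $E$ and to show it is at least $0.5$ whenever $x(t)\neq x(t+2)$. First I would substitute the definition twice and expand the wasted-sightings inner product, writing $E(x(t)) = \langle b, x(t)\rangle - \langle Ax(t), x(t+1)\rangle + \langle b, x(t+1)\rangle$ and the analogous expression for $E(x(t+1))$ in terms of $x(t+1)$ and $x(t+2)$. Subtracting, the two copies of $\langle b, x(t+1)\rangle$ cancel, leaving
\begin{displaymath}
E(x(t)) - E(x(t+1)) = \langle b, x(t)\rangle - \langle b, x(t+2)\rangle - \langle Ax(t), x(t+1)\rangle + \langle Ax(t+1), x(t+2)\rangle.
\end{displaymath}

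The key algebraic step uses that the adjacency matrix $A$ is symmetric, so that $\langle Ax(t), x(t+1)\rangle = \langle Ax(t+1), x(t)\rangle$. This lets me collect the two mixed terms into $\langle Ax(t+1), x(t+2) - x(t)\rangle$ and the two threshold terms into $-\langle b, x(t+2) - x(t)\rangle$, collapsing everything into a single inner product involving only the one vector $x(t+1)$:
\begin{displaymath}
E(x(t)) - E(x(t+1)) = \langle Ax(t+1) - b,\; x(t+2) - x(t)\rangle.
\end{displaymath}

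Next I would analyze this sum coordinate by coordinate, invoking the half-integer-threshold normalization established just before the lemma. Writing $u_i = (Ax(t+1))_i - b_i$, the update rule says $x_i(t+2) = 1$ precisely when $u_i > 0$ and $x_i(t+2) = 0$ precisely when $u_i < 0$; since $Ax(t+1)$ is integer-valued while $b_i$ is half-integral, in fact $u_i \geq 0.5$ in the first case and $u_i \leq -0.5$ in the second, and $u_i$ is never $0$. I would then check the three possibilities for $x_i(t+2) - x_i(t) \in \{-1,0,+1\}$: when the coordinate is unchanged the term vanishes; when $x_i(t+2)-x_i(t)=+1$ we have $u_i \geq 0.5$, so the term is at least $0.5$; and when $x_i(t+2)-x_i(t)=-1$ we have $u_i \leq -0.5$, so the term $-u_i$ is again at least $0.5$. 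Hence every summand is nonnegative, and at least one summand is $\geq 0.5$ exactly when some coordinate of $x(t+2)$ differs from $x(t)$, i.e. when $x(t)\neq x(t+2)$. Summing yields $E(x(t)) - E(x(t+1)) \geq 0.5$, the claimed inequality.

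The main obstacle is recognizing and correctly exploiting the symmetry of $A$ to fuse the two cross terms $\langle Ax(t), x(t+1)\rangle$ and $\langle Ax(t+1), x(t+2)\rangle$ into one inner product against the single vector $x(t+1)$; without this reduction the difference cannot be put into sign-controlled form, since the signs of $u_i$ are governed precisely by the $x(t+1)\to x(t+2)$ transition. The second delicate point is the half-integer convention: it is exactly what upgrades each changed coordinate's contribution from a weak ``strictly positive'' to a quantitative ``at least $0.5$,'' which is what makes $E$ a strictly decreasing potential (by fixed increments) outside of 2-cycles. Everything else is routine expansion and bookkeeping.
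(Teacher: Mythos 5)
Your proof is correct and follows essentially the same route as the paper: expand $E(x(t))-E(x(t+1))$, use the symmetry of $A$ to collapse the cross terms into the single inner product $\langle Ax(t+1)-b,\; x(t+2)-x(t)\rangle$, and then argue coordinate-wise via the half-integer thresholds that each changed coordinate contributes at least $0.5$. No gaps; this matches the paper's (Goles-style) argument step for step.
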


\begin{proof} \emph{(of Lemma \ref{decreasingunless2cycle})} The proof of this lemma is largely from \cite{Dynamicsofpositiveautomatanetworks}: we include the details for completeness and because they are critical to explaining the subsequent improvements we make.\\

\noindent We will show $E(x(t))-E(x(t+1))\geq 0.5$.
\begin{align}
E(x(t))&-E(x(t+1))=\\
&= \langle b,x(t)\rangle-\langle (Ax(t)-b), x(t+1) \rangle - \langle b,x(t+1)\rangle +\langle (Ax(t+1)-b), x(t+2) \rangle\\
&= \langle b,x(t)\rangle -\langle (Ax(t)), x(t+1) \rangle +\langle (Ax(t+1)-b), x(t+2) \rangle\\
&= \langle b,x(t)\rangle -\langle (Ax(t+1)), x(t) \rangle +\langle (Ax(t+1)-b), x(t+2) \rangle\\
&=-\langle (Ax(t+1)-b), x(t) \rangle +\langle (Ax(t+1)-b), x(t+2) \rangle\\
&=\langle x(t+2)-x(t), (Ax(t+1)-b)\rangle
\end{align}
In the algebra above: first we make a combined term for all sightings at $t$ by nodes which are 1 at $t+1$, next we switch the order of summation: sightings by 1s from $t+1$ at $x(t)$ are the same as sightings by 1s from $t$ at $x(t+1)$ (which is alternately true from $A$ a symmetric matrix), then simply combine terms.

\noindent The righthand side of the equality sums the following term over all nodes $i$:
\begin{align}
[x_i(t+2)-x_i(t)](A_ix(t+1)-b_i)
\end{align}
If $i$ has the same state in $x(t+2)$ and $x(t)$ then the term for $i$ has value 0. \\
If $x_i(t+2)=1$ and $x_i(t)=0$: $x_i(t+2)=1$  means $A_ix(t+1)-b_i\geq0.5$\\
If $x_i(t+2)=0$ and $x_i(t)=1$: $x_i(t+2)=0$  means $A_ix(t+1)-b_i\leq-0.5$\\

\noindent These facts follow from the half-integrality of the $b_i$. In both cases where $x_i(t+2)\neq x_i(t)$, the term for node $i$ contributes at least $1/2$ to the
value of $E(x(t))-E(x(t+1))$.  Since we assumed we are not in a 2-cycle yet (aka that $x(t)\neq x(t+2)$) there must be at least one $i$ that contributes value 1/2. This concludes the proof of Lemma \ref{decreasingunless2cycle}. \end{proof}

\noindent We can now prove that the evolving adoption vector will converge to a cycle of length at most 2.

\begin{proof}\emph{\textbf{(State of ``Convergence'' claimed in Theorem \ref{converged})}: }Suppose the cycle guaranteed by Lemma \ref{enterscycle} has length $c>2$ :
$x(t)=x(t+c)$ and $x(t)$ is not equal to any of $x(t+1), x(t+2),...,x(t+c-1)$. We can apply Lemma \ref{decreasingunless2cycle}:
\begin{align}
E(x(t))>E(x(t+1))>E(x(t+2))>...>E(x(t+c)).
\end{align}
Since $x(t)=x(t+c)$, we also have that $E(x(t))=E(x(t+c))$. This gives a contradiction. Thus $c\leq 2$.\end{proof}\\

\noindent It remains to prove how long this convergence will take. Following Goles, we showed in Lemma \ref{decreasingunless2cycle} that $E(x(t))$ decreases in every time step unless the process has converged to its final 2-cycle, so bounding the range of $E(x(t))$ will give an upper bound on the transient time of the process.  This is still our general strategy, but to give an improved upper bound over that from \cite{Dynamicsofpositiveautomatanetworks}, we use the monotonicity of our process (Lemma \ref{monoton}) to show that unless the process is already very close to a 2-cycle, the decrease in $E(x(t))$ is at least twice as large per time step as
specified by Lemma \ref{decreasingunless2cycle}:

\begin{lemma} \label{unlessclose}
Unless the evolving adoption vector is within $2|V|$ time-steps of entering a 2-cycle, at least 2 nodes $i$ have $x_i(t)\neq x_i(t+2)$, so that
\[
E(x(t+1))+1\leq  E(x(t)).
\]
\end{lemma}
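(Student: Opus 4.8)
\section*{Proof proposal for Lemma \ref{unlessclose}}

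The plan is to prove the contrapositive of the first assertion: if \emph{fewer than two} coordinates of $x(t)$ and $x(t+2)$ disagree, then the process is already within $2|V|$ time steps of entering its final 2-cycle. Once the "at least two disagreeing coordinates" conclusion is secured, the energy decrease follows immediately from the identity derived inside the proof of Lemma \ref{decreasingunless2cycle}, namely $E(x(t)) - E(x(t+1)) = \sum_i [x_i(t+2) - x_i(t)][A_i x(t+1) - b_i]$: every node with $x_i(t) \neq x_i(t+2)$ contributes at least $1/2$ (by half-integrality of the $b_i$) while every other node contributes $0$, so two disagreeing coordinates force a total decrease of at least $1$.

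The heart of the argument is an application of Lemma \ref{monoton}. First I would dispose of the trivial subcase: if $x(t) = x(t+2)$ then the process has already entered its 2-cycle, so it is within $2|V|$ (indeed $0$) time steps of it. Hence suppose exactly one coordinate, say that of node $j$, disagrees between $x(t)$ and $x(t+2)$. The key observation is that a single-coordinate disagreement makes the two adoption sets comparable: either $N_{x(t)} = N_{x(t+2)} \cup \{j\}$ or $N_{x(t+2)} = N_{x(t)} \cup \{j\}$, so one contains the other.

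Given this comparability, I would regard the orbit started at $x(t+2)$ as the orbit started at $x(t)$ advanced by two steps, and apply Lemma \ref{monoton} to the pair of starting vectors $x(t)$ and $x(t+2)$. Monotonicity propagates the inclusion through every later step, giving $N_{x(t+s)} \subseteq N_{x(t+2+s)}$ (or the reverse, uniformly) for all $s \geq 0$. Taking $s = 0, 2, 4, \ldots$ and chaining yields an even-indexed sequence $N_{x(t)}, N_{x(t+2)}, N_{x(t+4)}, \ldots$ that is nondecreasing (in the first case) or nonincreasing (in the second). A monotone chain of subsets of the finite set $V$ can change strictly at most $|V|$ times, so among the first $|V|+1$ even-indexed sets two consecutive ones coincide: $N_{x(t+2s)} = N_{x(t+2s+2)}$ for some $s \leq |V|$. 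Then $x(t+2s) = x(t+2s+2)$, and by determinism the evolution is periodic with period dividing $2$ from time $t + 2s \leq t + 2|V|$ onward; that is, the 2-cycle begins within $2|V|$ time steps of $t$, contradicting the hypothesis. Therefore at least two coordinates disagree.

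The step I expect to be the main obstacle is the conceptual one: recognizing that the single-coordinate-difference hypothesis is precisely what makes the two adoption sets comparable and thereby unlocks Lemma \ref{monoton}. Once the even-indexed sequence is seen to be a monotone chain, its stabilization within $|V|$ strict changes in the finite lattice $2^{V}$, and the bookkeeping translating this into the $2|V|$ bound on the transient time, are routine.
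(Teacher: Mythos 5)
Your proposal is correct and follows essentially the same route as the paper's proof: a single-coordinate disagreement makes $N_{x(t)}$ and $N_{x(t+2)}$ nested, Lemma \ref{monoton} propagates the nesting to all later times (the paper draws this as a z-pattern of inclusions; you invoke the lemma once on the pair $x(t)$, $x(t+2)$, which is the same thing), finiteness of $V$ forces an equality within $|V|$ strict inclusions, determinism then yields the 2-cycle within $2|V|$ steps, and the energy drop of $1$ follows from the identity in Lemma \ref{decreasingunless2cycle}. The only blemish is a harmless off-by-one in the phrase ``among the first $|V|+1$ even-indexed sets'' (one needs $|V|+1$ consecutive \emph{pairs}, i.e.\ $|V|+2$ sets), but your stated conclusion $N_{x(t+2s)}=N_{x(t+2s+2)}$ for some $s\leq |V|$ is exactly right.
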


\begin{proof} \emph{(of Lemma \ref{unlessclose})} Suppose that $x(t+2)$ differs from $x(t)$ in only one position.  We'll show that the adoption vector enters a 2-cycle within $2|V|$ time steps. Let $N_{x(t)}$ denote the subset of $V$ corresponding to nodes that are at 1 in $x(t)$. Suppose that $x_i(t+2)=1$ and $x_i(t)=0$, and for all other $j$, $x_j(t+2)=x_j(t)$. By definition, $N_{x(t)}\subseteq N_{x(t+2)}$. We use monotonicity (Lemma \ref{monoton}) to reason about $N_{x(\cdot)}$ in the following time steps.

Consider $N_{x(t+1)}$: it results from sightings at $N_{x(t)}$.  Since $N_{x(t)}\subseteq N_{x(t+2)}$, all sightings at $N_{x(t)}$ happen at $N_{x(t+2)}$.  Thus, $N_{x(t+3)}$ is a superset of $N_{x(t+1)}$: any node that decided to adopt based on $x(t)$ will certainly adopt based on $x(t+2)$. By the same rationale, $N_{x(t+1)}\subseteq N_{x(t+3)}$ gives that $N_{x(t+2)}\subseteq N_{x(t+4)}$. Then $N_{x(t+2)}\subseteq N_{x(t+4)}$ gives that $N_{x(t+3)}\subseteq N_{x(t+5)}$, etc.  This argument holds iteratively. To make this more clear, write the set of adopters in a z-pattern using $N_{x(t)}\Rightarrow N_{x(t+1)}$ to denote that the set of adopters at time $t+1$ results from the set of adopters at time $t$, and add the subset relationships:

\vspace{2mm}
\begin{center}
\begin{tabular}{lll}
\vspace{3mm}
$N_{x(t)}$              & $\Rightarrow$ &  $N_{x(t+1)}$     \\
\vspace{3mm}
\hspace{2mm} $\subseteqdown$         &$\rtoldownimplies$  & $\subseteqdown$ \\
\vspace{3mm}
$N_{x(t+2)}$            & $\Rightarrow$ &   $N_{x(t+3)}$   \\
\vspace{3mm}
\hspace{2mm} $\subseteqdown$         &$\rtoldownimplies$  & $\subseteqdown$\\
\vspace{3mm}
$N_{x(t+4)}$            & $\Rightarrow$ &  $N_{x(t+5)}$    \\
\vspace{3mm}
\hspace{2mm} $\subseteqdown$         &$\rtoldownimplies$ & $\subseteqdown$ \\
\vspace{3mm}
$N_{x(t+6)}$           & $\Rightarrow$ &  $N_{x(t+7)}$ \\
\vspace{3mm}
\hspace{7mm}$\dotsdown$          &$\dotsdiag$ & \hspace{5mm}$\dotsdown$
\vspace{1mm}
\end{tabular}
\end{center}
Since the update process is deterministic, if any of these $\subseteq$ relationships is not proper then the adoption vector has entered a 2-cycle. That is, if $N_{x(t+k)}=N_{x(t+k+2)}$, then it also must be the case that $N_{x(t+k+1)}=N_{x(t+k+3)}$, so that for all time steps after $t+k$ the adoption vector alternates between $x(t+k)$ and $x(t+k+1)$. The maximum number of time steps that could elapse before a $\subseteq$ relationship is forced to be non-proper is $2|V|$: since all the subsets must be proper, at least one node is in $(N_{x(t+k+2)}\setminus N_{x(t+k)})$ for all $k$. The longest path that could ever be achieved is if $N_{x(t)}$ is the empty set and $N_{x(t+2|V|)}$ is the entire set $V$.  After this many time steps it must be the case that $N_{x(t+k+2)}=N_{x(t+k)}$: the adoption vector has entered a 2-cycle within $2|V|$ time steps.

A symmetric argument (with opposite direction of $\subseteq$ relationships) establishes the case where $x_i(t+2)=0$ and $x_i(t)=1$, and for all other $j$, $x_j(t+2)=x_j(t)$.
We have established that if $x(t+2)$ differs from $x(t)$ in only one position the adoption vector will enter a 2-cycle within $2|V|$ time steps. Equivalently, if the adoption vector is more than $2|V|$ time steps from entering a 2-cycle, then it must be that $x(t+2)$ differs from $x(t)$ in strictly more than 1 position.  Thus, evaluating the final expression from the proof of Lemma \ref{decreasingunless2cycle}:
\begin{align}
E(x(t))&-E(x(t+1))=\langle x(t+2)-x(t), (Ax(t+1)-b)\rangle \geq 1.  
\end{align}
This concludes the proof of Lemma \ref{unlessclose}.\end{proof}

\begin{lemma}\label{range}
The range of values that $E(x(t))$ can achieve is $\leq 2|E|$.
\end{lemma}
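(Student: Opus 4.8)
The plan is to sandwich $E(x(t))$ between two expressions that depend only on the current adoption set and then charge the resulting spread to the nodes one at a time. Recall that $E(x(t)) = P(t) - W(t)$, where $P(t) = \langle b, x(t)\rangle$ counts the sightings necessary to produce $x(t)$ and $W(t) = \langle Ax(t) - b, x(t+1)\rangle$ counts the sightings wasted in turning on $x(t+1)$. Since every node that is $1$ in $x(t+1)$ saw at least $b_i$ neighbors at $1$ in $x(t)$, each summand of $W(t)$ is nonnegative, so $W(t) \ge 0$ and hence $E(x(t)) \le P(t) = \sum_{i : x_i(t)=1} b_i$. For the other direction I would bound the wasted sightings by the total number of sightings available in $x(t)$: dropping the $-b_i$ terms and extending the sum to all nodes gives $W(t) \le \sum_{i} A_i x(t) = \sum_{j} \deg(j)\, x_j(t)$, the total sightings generated by $x(t)$. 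Thus $E(x(t)) \ge P(t) - \sum_j \deg(j) x_j(t) = -\sum_{i : x_i(t)=1}(\deg(i) - b_i)$.

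Because $E(x(\cdot))$ is non-increasing along the trajectory (Lemma \ref{decreasingunless2cycle}), its range is $E(x(s)) - E(x(r))$ for the first and last relevant time steps $s \le r$. Applying the upper bound at $s$ and the lower bound at $r$, and writing $S_u = \{i : x_i(s)=1\}$ and $S_\ell = \{i : x_i(r)=1\}$, I get
\[
\text{range} \;\le\; \sum_{i \in S_u} b_i \;+\; \sum_{i \in S_\ell} (\deg(i) - b_i).
\]
Now I would charge this sum node by node. A node in $S_u \cap S_\ell$ contributes $b_i + (\deg(i) - b_i) = \deg(i)$; a node in only $S_u$ contributes $b_i \le \deg(i)$; a node in only $S_\ell$ contributes $\deg(i) - b_i \le \deg(i)$; and a node in neither contributes $0$. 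Every node therefore contributes at most $\deg(i)$, so the range is at most $\sum_i \deg(i) = 2|E|$.

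The one point that needs care is the inequality $b_i \le \deg(i)$ used for nodes of $S_u$, and this is exactly the place where the factor of two is saved: it is the complementarity $b_i + (\deg(i) - b_i) = \deg(i)$ that collapses the naive estimate (each sighting term is individually only bounded by $2|E|$, which would give a spread of $4|E|$) down to a single pass over the $2|E|$ directed edges. The bound $b_i \le \deg(i)$ holds for any node that is $1$ at a time $t \ge 1$, since such a node turned on under the update rule and so had at least $b_i$ of its $\deg(i)$ neighbors at $1$ one step earlier; restricting the maximizing time $s$ to $t \ge 1$ (the single step $x(0) \to x(1)$ is already absorbed by the $|V|$ slack of the overall convergence bound, and any threshold exceeding a node's degree may be capped without changing the dynamics) makes this legitimate. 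I expect this coupling — recognizing that the top of the range draws on the $b_i$ portion of each node's degree while the bottom draws on the complementary $\deg(i) - b_i$ portion — to be the only genuinely nontrivial step; everything else is the nonnegativity of wasted sightings and the elementary fact that any configuration generates at most $\sum_j \deg(j) = 2|E|$ sightings.
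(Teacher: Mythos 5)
Your proof is correct and uses the same core idea as the paper: the split of $E$ into necessary-minus-wasted sightings, an upper bound built from the $b_i$'s, a lower bound built from the $(\deg(i)-b_i)$'s, and the complementarity $b_i+(\deg(i)-b_i)=\deg(i)$ summing to $\sum_{i\in V}\deg(i)=2|E|$. The paper reaches the bound with less machinery by taking both bounds as sums over \emph{all} of $V$ (largest conceivable first term, i.e. $\sum_{i\in V} b_i$, and most negative conceivable second term, i.e. $-\sum_{i\in V}(\deg(i)-b_i)$), so the $b_i$'s cancel identically in the difference and it needs neither the monotone decrease of $E$ along the trajectory, nor your per-node case analysis, nor the observation that $b_i\le\deg(i)$ for nodes active at $t\ge 1$.
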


\begin{proof} \emph{(of Lemma 5)}
\noindent Recall the definition of $E(x(t))$:
\begin{align}
E(x(t))=
(\mbox{Sightings Necessary to get } x(t))-(\mbox{Sightings wasted in turning on } x(t+1))
\end{align}

\noindent From our definitions of necessary and wasted sightings, the first term is always positive (or 0) and the second term is always negative (or 0). Letting the first term be as large as possible, and the second term be as small as possible, we obtain an upper bound on $E(x(t))$ of $\sum_{i=1}^{|V|} b_i$ (every node in $|V|$ makes the necessary sightings for it to adopt).

An obvious lower bound on $E(x(t))$ assumes the first term is $0$ and makes the second term as large in magnitude
as possible (as many wasted sightings as possible at every node):
\begin{align}
E(x(t))\geq -\sum_{i=1}^{|V|} (\mbox{deg}(i)-b_j) = -\sum_{i=1}^{|V|} (\mbox{deg}(i))+\sum_{i=1}^{|V|} b_i.
\end{align}
Thus the range of $E(x(t))$ is at most the upper bound minus the lower bound:
\begin{align}
\sum_{i=1}^{|V|} b_i- \Big( -\sum_{i=1}^{|V|} (\mbox{deg}(i))+\sum_{i=1}^{|V|} b_i\Big)=\sum_{i=1}^{|V|} (\mbox{deg}(i))=2|E|. 
\end{align}
\noindent This concludes the proof of Lemma \ref{range}.\end{proof}

\begin{proof} \emph{\textbf{(Time to Convergence claimed in Theorem \ref{converged})}}
The range of $E(x(t))$ is at most $2|E|$ from Lemma \ref{range}.  At most $2|V|$ time steps can decrease $E(x(t))$ by only 1/2.  Every other time step must result in a decrease in $E(\cdot)$ of size at least 1.  Thus, from any initial adoption vector $x(0)$ there are at most $(2|V|+ (2|E|-|V|)/1)=2|E|+|V|$ time steps before the adoption vector enters a stable state or a 2-cycle. 
\end{proof}\\

\noindent\textbf{Extension: When Some Relationships Are More Influential.}
To our model input, add that each edge $e\in E$ has a weight $w_e$ from the set of integers $\{0,1,2,...,k\}$, and modify the decision rule as follows.

\begin{itemize}
\item \textbf{Decision Rule:} For each node $i\in V$, if the edge-weighted sum of neighbors of $i$ who adopt Behavior 1 at $t-1$ is at least $b_i$, then node $i$ adopts Behavior 1 at time $t$. Otherwise node $i$ adopts Behavior 0 at time $t$.
\end{itemize}

\noindent Our proof generalizes immediately. The matrix $A$ is now the weighted adjacency matrix. The definition of $E(x(t))$ is generalized so the terms describe the weighted amount of sightings to get $x(t)$ and the wasted amount of weighted sightings to get $x(t+1)$.  Sums of degrees become sums of weighted degrees. The range of $E(x(t))$ is now $2\sum_{e\in E}w_e\leq 2k|E|$.\\

\begin{theorem} \textbf{(Convergence Bound: Weighted-Neighbor Reversible-Threshold Updating)}.\\
In graph $G=(V,E)$, where each edge has weight $w_e$ from $\{0,1,2,...,k\}$, given an arbitrary initial adoption vector $x(0)\in \{0,1\}^{|V|}$:
within $(2 \sum_{e\in E}w_e+ |V|)\leq (2k|E|+|V|)$ time steps the evolving adoption vector will converge to a cycle of length at most 2.
\end{theorem}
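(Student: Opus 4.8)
The plan is to re-run, essentially verbatim, the four-lemma argument that established Theorem \ref{converged}, replacing every counting quantity by its edge-weighted analogue and then re-deriving the single quantity that genuinely depends on the weights, namely the range of the potential $E(x(t))$. Concretely, I would let $A$ now denote the weighted adjacency matrix, so that $A_{ij}=w_{(i,j)}$ when $(i,j)\in E$ and $0$ otherwise, and read $A_ix(t)$ as the total weight of $i$'s currently-adopting neighbors. With this reading, the quantities ``Sightings Necessary to get $x(t)$'' and ``Sightings wasted in turning on $x(t+1)$'' become their weighted versions, the defining identity $E(x(t))=\langle b,x(t)\rangle-\langle(Ax(t)-b),x(t+1)\rangle$ is unchanged in form, and the whole transient-time accounting can be reused once the range bound is updated.

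Next I would verify the three places where nonnegative integer weights actually enter the argument. First, monotonicity (Lemma \ref{monoton}) survives precisely because the weights are nonnegative: if $N_{x(t)}\subseteq N_{x'(t)}$, then the weighted neighbor-sum of every node under $x'$ is at least its weighted sum under $x$, so any node that crosses its threshold under $x$ also crosses it under $x'$, and the induction goes through word for word. Second, the half-integrality reduction still applies, since weights and thresholds are integers and hence each weighted sum $A_ix(t)$ is an integer; replacing $b_i$ by $b_i-0.5$ therefore leaves every comparison $A_ix(t)\gtrless b_i$ with the same outcome, which is exactly what forces each non-trivial term $[x_i(t+2)-x_i(t)](A_ix(t+1)-b_i)$ to contribute at least $1/2$ in Lemma \ref{decreasingunless2cycle}. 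Third, the change-of-summation-order step in that lemma used only that $A$ is symmetric, and the weighted adjacency matrix of an undirected graph is still symmetric, so the algebra producing $E(x(t))-E(x(t+1))=\langle x(t+2)-x(t),\,Ax(t+1)-b\rangle$ is untouched. Lemma \ref{unlessclose} depends only on monotonicity and on counting \emph{nodes} rather than edges, so its conclusion and its $2|V|$ bound carry over verbatim, as does Lemma \ref{enterscycle}, which needs only finiteness of the state space.

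The one genuinely new computation is the analogue of Lemma \ref{range}. Writing $\mathrm{wdeg}(i)=\sum_{j\in\delta(i)}w_{(i,j)}$ for the weighted degree, I would bound $E(x(t))$ above by $\sum_i b_i$ (all necessary weighted sightings present, none wasted) and below by $-\sum_i(\mathrm{wdeg}(i)-b_i)$ (no necessary sightings, maximal waste), so the range is at most $\sum_i \mathrm{wdeg}(i)=2\sum_{e\in E}w_e\le 2k|E|$. Feeding this into the same dichotomy as before — at most $2|V|$ steps can drop $E$ by only $1/2$, and every other step drops it by at least $1$ — gives a transient time of at most $(2\sum_{e\in E}w_e-|V|)+2|V|=2\sum_{e\in E}w_e+|V|\le 2k|E|+|V|$, which is the claimed bound.

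I do not expect a real obstacle here, since the entire downstream accounting is a mechanical substitution of $2\sum_{e\in E}w_e$ for $2|E|$. The point requiring genuine care is the bookkeeping in the two structural checks above: it is the \emph{nonnegativity} of the weights that keeps monotonicity alive, and the \emph{integrality} of the weighted neighbor-sums that lets the half-integrality trick still separate the potential decrements by $1/2$. If either hypothesis were dropped both pillars of the proof would fail, so the hard part is really just confirming that these two properties hold under weights drawn from $\{0,1,2,\dots,k\}$; everything else follows as in the unweighted case.
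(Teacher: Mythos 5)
Your proposal is correct and follows exactly the paper's own route: the paper's proof of this theorem is precisely the remark that the unweighted argument ``generalizes immediately'' once $A$ becomes the weighted adjacency matrix, $E(x(t))$ counts weighted sightings, and the range bound becomes $2\sum_{e\in E}w_e\leq 2k|E|$. Your write-up simply makes explicit the checks the paper leaves implicit (nonnegativity of weights for monotonicity, integrality of weighted sums for the half-integrality trick, symmetry of the weighted $A$), and your final accounting matches the paper's.
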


\section{Conclusion and Future Directions}

Motivated by the prevalence of repeated behavior updating in the computational social sciences literature, we considered the planning problem of designing \emph{Sticky Seeding Interventions in Networks} under reversible-threshold discrete-time updating. We proved that several natural variants of the planning problem are hard-to-approximate (by reductions from Set Cover), but we also provided quadratic bounds on function evaluation (even from arbitrary initial states of adoption). Thus, optimization heuristics that repeatedly evaluate the objective may be practical, even when empirically-motivated features like heterogeneous edge weights are added to the model. Incorporating model features that represent new data sources (like variable edge-weights)
increases the specificity of predictions, potentially
allowing valuable contrasts in qualitative properties of optimized interventions across different networks.

Our long-term spread objective that considers reversible adoption and heterogeneous edge weights is already complicated to evaluate, even though the model we consider in this paper is entirely deterministic. A prominent alternative to the game-theory-style concurrent-updating we've assumed is \emph{random asynchronous updating} where nodes update in a random order one at a time.\footnote{Sometimes these updates follow a random permutation, though sometimes there is no control to maintain uniform frequency of updates across the node set.} It is easy to construct small examples where random asynchronous updating leads to strongly-different long-term behavior than concurrent updating (even in an expected-value sense), or where the randomly-realized long-term outcomes of asynchronous updating vary enormously.
It is unclear whether these types of differences might somehow be damped at a larger scale.
Efficient exact expected long-term spread evaluation in this asynchronous context seems impossible due to the exponentially-large space of possible update orders,  though perhaps a strong assumption about the structure of the input network (as frequently appears in the statistical physics literature) might provide some traction.

We close with a general comment. The study of spread in networks  gives rise to a number of fascinating theoretical and applied questions. Variations in model assumptions and network structure can cause dramatic changes in the qualitative behavior of the system and in the form of optimized interventions. Many positive theoretical results rely on rather-special assumptions (for example, very-specific structural restrictions, or highly-special forms of uncertainty in the input instance). A significant future challenge will be to understand which messages about networks are ``stable'' against some variation in these assumptions. While it may seem that theoretical reasoning, synthetic computational exploration, and scientific investigation of real networks are diverging fields, we believe that each of these areas has rich insights to offer to its counterparts.

\acknowledgements
\label{sec:ack}
The author is thankful the reviewers and editor for useful suggestions in the preparation of this manuscript.

\nocite{*}
\bibliographystyle{abbrvnat}
\bibliography{reviseddmtcsmasterbib}
\label{sec:biblio}

\vspace{5mm}

\section*{Appendix}
\subsection*{Example of 2-cycle}
As mentioned in Section \ref{Modeldescrip}, the adoption vector may not converge to a single repeating vector.

\begin{figure}[h!]
\fbox{
\begin{minipage}{14.5 cm}
\vspace{4mm}

\[\begin{tikzpicture}
	\vertex (vleft) at (-1.5,0) [label=left:$1$] {};
	\vertex (vmiddle) at (0,0) [label=below left:$2$] {};
	\vertex (vabove) at (0,1.5) [label=left:$1$] {};
	\vertex (vbelow) at (0,-1.5) [label=left:$1$] {};
	\vertex (vright) at (1.5,0) [label=right:$1$] {};
	\path
		(vleft) edge (vmiddle)
		(vabove) edge (vmiddle)
		(vright) edge (vmiddle)
		(vbelow) edge (vmiddle)
	 ;
\end{tikzpicture}\]
\caption{\textbf{(An alternating cycle of length 2 with average adoption $50\%$)} Nodes are marked with their thresholds. Start from an initial pattern of adoption in which only the center node has Behavior 1. The adoption pattern cycles between two vectors; average adoption is $50\%$.}
\end{minipage}}
\end{figure}
Figure 2 demonstrates an example in which iterated application of the decision rule does not converge to a single repeating adoption vector. In the example in Figure 2, an average rate of long term adoption can be computed because the adoption vector enters a small cycle (of length 2).

The long-term behavior of the update rule is always at least as stable as the example in Figure 2: after a sufficient number of time steps the adoption vector will either be a stable repeating vector or alternate between 2 adoption vectors.\footnote{This was true even for the more general class of threshold automata Goles studied.} We show in the main text that convergence to such an alternating state happens within $O(|E|)$-time steps (where the hidden constant is small).

\subsection*{Hardness of Budgeted Maximum Conversion}

Here we give the details of the proof of Theorem \ref{bmcis1minus1overe} from Section \ref{hardnesssec}.

Consider the \emph{Budgeted Maximum Coverage Problem} defined as follows. Let $S$ denote a set of elements $\{1,2,...,n\}$. Let $F$ denote a group of subsets of $S$ which we will denote $J_1,J_2,...,J_{|F|}$. Given a budget $k$, the objective is to specify a set of $k$ indices $I$ so that the cardinality of the following union is maximized:
\begin{align}
\cup_{i\in I}J_i.
\end{align}

\noindent \textbf{Theorem 2}
\emph{The Budgeted Maximum Coverage Problem can be reduced in polynomial time to an instance of Budgeted Maximum Conversion (for any intervention lasting at least 2 time steps). As a result, Budgeted Maximum Conversion is $(1-\frac{1}{e})$-hard to approximate.}

\begin{proof} \emph{(of Theorem \ref{bmcis1minus1overe})} Given an arbitrary instance of the Budgeted Maximum Coverage Problem, construct an instance of Budgeted Maximum Conversion as follows. For each element $i\in S$, create a node $x_i$. Denote this set of ``element nodes" by $V_S$. For each node $x_i \in V_S$ create a unique dummy node $y_i$. Denote this set of dummy nodes by $V_D$. For each $J_k\in F$ create a node $x_{J_k}$. Denote this set of ``subset nodes" by $V_F$.
Our constructed instance has node set $V=V_S\cup V_D\cup V_F$.
For every $(i, J_k)$-pair with $i \in S$ and $J_k\in F$: if $i\in J_k$ then include edge $(x_i, x_{J_k})$ in edge set $E$.
For each $i\in S$, include the edge $(x_i,y_i)$ in $E$.
Finally, for each pair of subset nodes in $V_F$, include the edge between them in $E$.
Let the threshold for each node $x_i\in V_S$ be 1. Let the threshold for each node  $y_i\in V_D$ be 1. Let the threshold for each node $x_{J_k}\in V_F$ be the degree of $x_{J_k}$, which by construction is $|J_k|+|F|-1$. Let the budget for seeding be $k$ nodes (assume $k<|F|-1$ as otherwise the best Maximum-Coverage index set could be found by enumeration in polynomial time).

We make two observations about this class of constructed instances. First, no subset node will adopt Behavior 1 except in time steps in which it is forced to do so. This follows from our construction of the high thresholds for $x_{J_k}$: the maximum number of neighbors of $x_{J_k}$ adopting Behavior 1 is the number of seeds ($k$) plus the number of elements in $J_k$, for a total of $|J_k|+k<|J_k|+|F|-1$ (this is insufficient for $x_{J_k}$ to freely choose Behavior 1). Thus, even if all element nodes adopt Behavior 1, among the subset nodes only Seed nodes will adopt Behavior 1 (and only while the intervention is in place).

Second, suppose that a Seed Set contains a subset node $x_{J_k}$. If $x_{J_k}$ is forced to adopt Behavior 1 for at least $t\in \{0,1\}$ then all of the element nodes adjacent to $x_{J_k}$ must adopt Behavior 1 for all $t$. This follows from our construction of the thresholds. Since $x_{J_k}$ has Behavior 1 at $t=0$, at $t=1$ all element nodes adjacent to $x_{J_k}$ will choose Behavior 1. At $t=2$ each such element node observes that $x_{J_k}$ had (forced) Behavior 1 at $t=1$ and so continues to adopt Behavior 1. Also at $t=2$, the dummy node connected to each such element node will choose to adopt Behavior 1. Then, at $t=3$, regardless of adoption by $x_{J_k}$, each element node and its dummy will observe each other adopting Behavior 1 at $t=2$ and continue to adopt Behavior 1. All future time steps have the same trivial updates.
Effectively, seeds which are subset nodes quickly cause their corresponding (element node, dummy node)-pairs to ``stabilize" at Behavior 1.

Now solve the constructed instance of Budgeted Maximum Conversion where $d\geq2$ (the Seed Set will at least be forced to Behavior 1 for $t\in\{0,1\}$). Call the returned Seed Set $Q$. From the budget constraint for seeding we have $|Q|\leq k$. We will massage $Q$ to find a seed set that only contains subset nodes which has size at most $|Q|$ and still converts as many nodes to  Behavior 1 as $Q$ does. This massaged $Q$ will have a natural interpretation as a feasible solution for the Budgeted Maximum Coverage Problem for $S$.

For $i\in S$, suppose that either an element node $x_i$ or a dummy node $y_i$ is in $Q$. If all $J_k\in F$ that have $i\in J_k$ are in $Q$, then $x_i$ (or $y_i$ respectively) can be removed from $Q$ to obtain a strictly smaller Seed Set that converts to Behavior 1 all nodes that $Q$ does. Otherwise, there exists some $J_l\in F$ with $i\in J_l$ and $x_{J_l}\notin Q$. In this case, massage $Q$ by removing $x_i$ (or $y_i$ respectively) and adding subset node $x_{J_l}$.  The resulting set of seeds still converts both $x_i$ and $y_i$ to Behavior 1 (since $i\in J_l$ this happens by $t=2$ even when $d$ is as low as 2 by our second observation), and all other differences in node updates substitute Behavior 1 in the place of Behavior 0 (as now all element nodes-and their dummy copies- adjacent to $x_{J_l}$ will permanently adopt Behavior 1 from our second observation). Repeat this removal/massaging procedure until $Q$ contains only nodes corresponding to subsets from $F$.

Now interpret the subsets corresponding to nodes of massaged $Q$ as a proposed budgeted cover, $I$, of cardinality at most $|Q|\leq k$. As verified above, the massaged seed set $Q$ still converts all nodes to Behavior 1 that $Q$ did (and does so by $t=2$). The number of elements from $S$ covered by $I$ is precisely half the number of nodes converted in our constructed Budgeted Maximum Conversion instance.

Finally, the optimal value for Budgeted Maximum Coverage problem cannot be strictly more than half of the Budgeted  Maximum Conversion optimal value, since every budgeted cover $L$ whose union has
$|\cup_{i\in L}J_i|$ elements
corresponds to a seed set in our constructed instance of Min-Cost Complete Conversion which converts exactly $2|\cup_{i\in L}J_i|$ nodes to Behavior 1 (all element nodes corresponding to elements in $\cup_{i\in L}J_i$ are converted to Behavior 1 at $t=1$, and their dummy partners are converted to Behavior 1 at $t=2$, and these nodes adopt Behavior 1 in all future time steps from our second observation). 

Due to the correspondence demonstrated between solutions for an arbitrary Budgeted Maximum Coverage instance and solutions of exactly twice the numerical value for our polynomially-constructed instance of Maximum Budgeted Conversion (for arbitrary $d\geq 2$), an $\alpha$-approximation algorithm for Maximum Budgeted Conversion immediately gives an $\alpha$-approximation algorithm for Budgeted Maximum Coverage. Thus budgeted maximum conversion (for arbitrary $d\geq 2$) inherits $(1-1/e)$-hardness from the Budgeted Maximum Coverage Problem (this hardness holds unless $P=NP$, see \cite{Feige:1998:TLN:285055.285059} for details).
\end{proof}

A simpler reduction without the complete subgraph on $V_F$ is possible if the \emph{threshold for a subset node can exceed the degree of the node}. This seems somewhat abusive of the model, however, so we have included the argument as given.

\end{document}